\newtheorem{theorem}{Theorem}[section]
\newtheorem{proposition}{Proposition}[section]
\newtheorem{corollary}{Corollary}[section]
\theoremstyle{definition}
\newtheorem{remark}{Remark}[section]
\newtheorem{assumption}{Assumption}
\numberwithin{equation}{section}
\newcommand{\beq}{\begin{equation}}
\newcommand{\bea}[1]{\begin{array}{#1} }
\newcommand{\eeq}{ \end{equation}}
\newcommand{\ea}{ \end{array}}
\def\mean#1{\mathchoice%
          {\mathop{\kern 0.2em\vrule width 0.6em height 0.69678ex depth -0.58065ex
                  \kern -0.8em \intop}\nolimits_{\kern -0.4em#1}}%
          {\mathop{\kern 0.1em\vrule width 0.5em height 0.69678ex depth -0.60387ex
                  \kern -0.6em \intop}\nolimits_{#1}}%
          {\mathop{\kern 0.1em\vrule width 0.5em height 0.69678ex
              depth -0.60387ex
                  \kern -0.6em \intop}\nolimits_{#1}}%
          {\mathop{\kern 0.1em\vrule width 0.5em height 0.69678ex depth -0.60387ex
                  \kern -0.6em \intop}\nolimits_{#1}}}
\def\vintslides_#1{\mathchoice%
          {\mathop{\kern 0.1em\vrule width 0.5em height 0.697ex depth -0.581ex
                  \kern -0.6em \intop}\nolimits_{\kern -0.4em#1}}%
          {\mathop{\kern 0.1em\vrule width 0.3em height 0.697ex depth -0.604ex
                  \kern -0.4em \intop}\nolimits_{#1}}%
          {\mathop{\kern 0.1em\vrule width 0.3em height 0.697ex depth -0.604ex
                  \kern -0.4em \intop}\nolimits_{#1}}%
          {\mathop{\kern 0.1em\vrule width 0.3em height 0.697ex depth -0.604ex
                  \kern -0.4em \intop}\nolimits_{#1}}}
\newcommand{\aveint}[2]{\mathchoice%
          {\mathop{\kern 0.2em\vrule width 0.6em height 0.69678ex depth -0.58065ex
                  \kern -0.8em \intop}\nolimits_{\kern -0.45em#1}^{#2}}%
          {\mathop{\kern 0.1em\vrule width 0.5em height 0.69678ex depth -0.60387ex
                  \kern -0.6em \intop}\nolimits_{#1}^{#2}}%
          {\mathop{\kern 0.1em\vrule width 0.5em height 0.69678ex depth -0.60387ex
                  \kern -0.6em \intop}\nolimits_{#1}^{#2}}%
          {\mathop{\kern 0.1em\vrule width 0.5em height 0.69678ex depth -0.60387ex
                  \kern -0.6em \intop}\nolimits_{#1}^{#2}}}
\def\eqn#1$$#2$${\begin{equation} \label#1#2\end{equation}}
\def\charfn_#1{{\raise1.2pt\hbox{$\chi
_{\kern-1pt\lower3pt\hbox{{$\scriptstyle#1$}}}$}}}
\def\qq1{q_*}
\def\q2{q_{**}}
\newdimen\vintbar
\def\vint{-\kern-\vintbar\int}
\def\0{\boldsymbol 0}
\newtoks\by
\newtoks\paper
\newtoks\book
\newtoks\jour
\newtoks\yr
\newtoks\pages
\newtoks\vol
\newtoks\publ
\def\name[#1, #2]{#1 #2}
\def\ota{{\hbox{\bf ???}}}
\def\cLear{\by = \ota\paper = \ota\book = \ota\jour = \ota\yr = \ota
\pages = \ota\vol = \ota\publ = \ota}
\def\endpaper{\the\by, \textit{\the\paper},
{\the\jour} \textbf{\the\vol} (\the\yr), \the\pages.\cLear}
\def\endbook{\the\by, \textit{\the\book},
\the\publ, \the\yr.\cLear}
\def\endpap{\the\by, \textit{\the\paper}, \the\jour.\cLear}
\def\endproc{\the\by, \textit{\the\paper}, \the\book, \the\publ,
\the\yr, \the\pages.\cLear}
\begin{document}


\title[ Moment Formula for  local/implied volatility ]{Moment Explosions, implied volatility and Local  Volatility at Extreme Strikes}



\author{ Sidi Mohamed Aly }
\address{Sidi Mohamed Aly\\ Centre for Mathematical Sciences, Mathematical Statistics\\
Lund University, Box 118, SE-221 00 Lund, Sweden}
\email{sidi@math.lth.se}

\begin{abstract}
We consider a stochastic volatility model where the moment generating function of the logarithmic price is finite only on part of the real line. Using a new Tauberian result obtained in \cite{Aly13} and \cite{Aly15}, we show that the knowledge of the moment generating function near its critical moment gives a sharp asymptotic expansion (with an error of order $o(1)$) of the local volatility and implied volatility for small and large strikes. We apply our theoretical estimates to Gatheral's SVI parametrization of the implied volatility and Heston's model.
 
 \medskip

\noindent
2000  {\em Mathematics Subject Classification : } 60E10; 62E20; 40E05.
\noindent

\medskip

\noindent
{\it Keywords and phrases: Tauberian theorems; local volatility; implied volatility; extreme strikes; Moments; SVI;  Heston model.}

\end{abstract}

\maketitle

\section{introduction}
\noindent In \cite{Aly13} and \cite{Aly15} we derived a Tauberian result that gives a sharp asymptotic formula for  the complementary cumulative distribution function of a random variable whose moment generation function (MGF) is finite only on part of the real line. The corresponding formula depends only on the behavior of the MGF near its critical moment. This result is very useful in the case where the moment generating function is known (as it is the case for the CIR process, Heston's model and many time changed L\'evy processes, cf. \cite{Aly13}, \cite{BenaimFriz08}). In \cite{Aly15} we have shown that this Tauberian result may also be used in cases where MGF is unknown in order to derive an asymptotic expansion of this MGF near its critical moment and then (using the theorem once more) to derive a sharp asymptotic formula for the cumulative distribution; indeed we proved in  \cite{Aly15}  that the MGF of $V^{2(1-p)}_t$, where $V$ is given by the stochastic differential equation
\[
d V_t = (a - b V_t ) d t + \sigma V^p_t dW_t,
\]
explodes at the critical moment $ \mu^\ast_t = \frac{ b }{ \sigma^2 (1-p)  ( 1 - e^{-2b(1-p) t}  )}_t $ and derived and sharp asymptotic formula for the MGF of $V^{2(1-p)}_t$ near $\mu^\ast_t$ as well as the complementary cumulative distribution function of $V^{2(1-p)}_t$.

In the present work we give another application to the Tauberian result of \cite{Aly15} to the well known Dupire's local volatility surface: $ \Sigma^2(t,k) = 2 \partial_T C (T,K)/K^2 \partial_{KK} C(T,K)|_{K=e^k}$ where $C(T,K)$ denotes the price of a European Call option with strike $K$ and maturity $T$. Indeed, if we denote $ X_t = \ln (S_t/S_0)$ with $S_t$ referring to the stock price, then the Call price is given as (assuming without loss of generality that the interest and dividend rates are 0) 
\[
C(T,K) = \mathbb{E} ( e^{X_T} - K)_+.
\]
Differentiating with respect to $K$ we have 
\[
\partial_K C(T,K) =- \mathbb{P} (X_T > \ln(K)).
\]
 On the other hand, our Tauberian result states that if the logarithm of the MGF: $\Lambda : p \to \ln  \mathbb{E} e^{ p X}$ is a regularly varying  function of $ \frac{1}{\mu^\ast - p} $ with a positive index near its critical moment  $\mu^\ast$  (plus other non restrictive assumptions) the  Fenchel-Legendre transform  $\Lambda^\ast$ of $\Lambda $ is  a good asymptotic approximation of the logarithm of  cumulative distribution of $X$ with $\limsup$-$\liminf$  arguments:
\begin{equation}\label{expans_cdf0}
\mathbb{P} (X \geq x) = e^{ - \Lambda^\ast (x)}  \left( \frac{ \sqrt{ { p^\ast}' (x) } }{ p^\ast (x) \sqrt{2 \pi}} - \frac{ 2 + \frac{\alpha}{ (\alpha + 1)^2 } }{  24 \mu^\ast  \sqrt{2 \pi} } ~\frac{1}{x^2 \sqrt{ {p^\ast}'(x) }  }    + o(     \frac{1}{x^2  \sqrt{ {p^\ast}' (x) } }    )   \right).
\end{equation}
Here $p^\ast$ is such that $\Lambda^\ast (x) = p^\ast (x) x + \Lambda (  p^\ast(x))$ and $\alpha$ is the index of the regularly varying function $ x \to \Lambda( \mu^\ast - \frac{1}{x}Ê)$.  In  particular if the logarithmic price $ X_t = \ln (S_t/S_0)$ satisfies the assumptions of the theorem then we immediately have a sharp asymptotic formula for the price of Call option as well as its derivative with respect to the strike; thus a sharp asymptotic formula can be derived  for Dupire's local volatility function $\Sigma(T,K)$ by integrating and differentiating (\ref{expans_cdf0}) with respect to $t$ and $x$. It will turn out in fact that a sharp asymptotic formula for local volatility function can be obtained in terms of $\Lambda$ and its derivates with respect to $t$ and $\mu$ via another method. We find indeed that under the same assumptions as for (\ref{expans_cdf0}), the local volatility for small/large strikes is given as
\begin{equation}\label{local_vol0}
  \Sigma^2(t, \pm y ) = \sigma^\pm_0 (t) y +
   \frac{  \partial_t \Delta_\pm (t,\nu_\pm(y)) \mp q \mp   \sigma^\pm_0 (t) ( 1 \mp 2 \mu^\ast_\pm   \pm \frac{1}{\nu_\pm (y) }Ê) y /\nu_\pm(y)     }{  \frac{1}{2}  ( (\mu^\ast_\pm   - \frac{1}{\nu_\pm(y)} )^2 \mp (\mu^\ast_\pm   - \frac{1}{\nu_\pm(y)} )) \left(   1 +  (\gamma^2_\pm - \gamma_\pm )\frac{ \nu_\pm(y)^2}{2 y^2 \nu'_\pm (y) }        \right) } + o(1),
\end{equation}
where $\sigma^\pm_0= \frac{- 2 \partial_t \mu^\ast_\pm (t)}{\mu^\ast_\pm (t) \mp \mu^\ast_\pm (t) } $, with $\mu^\ast_\pm (t)$ referring to the critical moment of $ (\pm X_t)$ and $ \nu_\pm$ is given in terms of the moment generating function of $(\pm X_t)$. In particular, the second term in the right hand side is  $ \mathcal{O} ( y^\frac{\alpha_\pm}{\alpha_\pm + 1} )$, where $\alpha_+$ (resp. $\alpha_-$) is the index of the regularly varying function $ \ln \mathbb{E} e^{ ( \mu^\ast_+ - \frac{1}{x} )X_t}$ (resp. $ \ln \mathbb{E} e^{ -( \mu^\ast_- - \frac{1}{x} )X_t}$).  

The expansion (\ref{local_vol0}) of the local volatility (and (\ref{expans_cdf0})) applies to a large class of time changed L\'evy models (see e.g. \cite{BenaimFriz08} and \cite{BenaimFriz09}). Several affine stochastic volatility models satisfy also the assumption of moment explosion (cf. \cite{KellerRessel11}). The most famous example of the later family is the Heston model, whose MGF satisfies the assumption of the main Tauberian theorem of \cite{Aly15}. In particular (\ref{expans_cdf0}) holds for the logarithmic price, as highlighted in \cite{Aly13}; this is a slight improvement on a result obtained in \cite{friz11}.  As regards the local volatility asymptotics, Friz and Gerhold show in \cite{friz15} that $ \Sigma (t, y) \sim \sigma^+_0 (t) y$. We clearly see that (\ref{local_vol0}) applied to Heston's model gives a substantial improvement on \cite{friz15}.  We draw attention that the authors of \cite{friz15} use Saddle point and Hankel contour integration methods to obtain an equivalence of the local volatility function for the specific Heston and NIG models, whereas our result is model independent.

Our Tauberian result also allows to derive a sharp asymptotic formula for the Black-Scholes implied volatility, similar to the one obtained for the local volatility function. Indeed, we show that the implied volatility for small/large stokes is given as function of $k:= \ln(K/S_0)$ by:
\begin{equation}\label{implied^0_vol0}
 t  \sigma^2 (t,\pm k)  = 4 \Lambda^\ast_\pm (t,k) + 2\tilde{c}^\pm_t (k) \mp 2 k - 4 \sqrt{ ( \Lambda^\ast_\pm(t,k) +\frac{  \tilde{c}^\pm_t (k)}{2})  (  \Lambda^\ast_\pm(t,k) +\frac{  \tilde{c}^\pm_t (k)}{2} \mp k)  }    + \mathcal{O} (k^\frac{-\alpha_\pm \wedge 1}{\alpha_\pm+1} ) .
\end{equation}
where  $ \Lambda^\ast_\pm (t,.)$ is Fenchel-Legendre transform of $\Lambda_\pm (t, .) $ and $\tilde{c}^\pm_t \sim \ln (k)$  is given in terms of $\Lambda$. We draw attention to the fact that under the theorem«s assumption, $ \Lambda^\ast_\pm (t,x) = \mu^\ast_\pm (t) x + \mathcal{O} ( x^\frac{\alpha_\pm}{\alpha_\pm + 1} )$. In particular, dividing both sides of (\ref{implied^0_vol0}) by $k$ and letting $k \to \pm \infty$ we find  Lee's moment formula \cite{lee04}.  

While our results apply to any model whose moment generating function of the logarithmic price is given explicitly, we choose to study two examples. In the first one, we consider the SVI parametrization of the implied volatility; we show that this will lead to an explosion of the MGF and that an asymptotic formula for the distribution tail as well as the local volatility may be easily obtained. The second example that we will study in details is the Heston model; we use the explicit formula for $\mathbb{E} S^p_t$, for $p < 0$ and $p \geq 1$ obtained in \cite{Aly13} (without any restriction on the parameters). We then apply our Tauberian results to derive sharp asymptotic expansions for the left/right wings of  local volatility as well as  implied volatility. We also show  that a similar formula holds for the Stein-Stein model.

The literature around the implied/volatility is very vast and the number of papers dealing with issues related to the implied/local volatility asymptotic in stochastic volatility model is countless. For the local volatility case, the most relevant work to this paper  is  \cite{friz15} mentioned above (see also \cite{stefano13} for the numerical and practical aspects of \cite{friz15}). As regards implied volatility and moment explosion, this work extends the famous Lee's moment formula obtained in \cite{lee04} by looking more closely to the moments of the stock prices. Benaim and Friz \cite{BenaimFriz08} have shown the equivalence of the regular variation property between the logarithm of density and the implied volatility. Other results deal with the implied volatility of some stochastic volatility models, such as \cite{friz11}, \cite{achil10}, \cite{achil14},  \cite{gatheral12}. 

This paper is organized as follows: In section~\ref{section1} we recall  the main Tauberian result of \cite{Aly15} that we will use to formulate and proof a new Tauberian result which will be the key to link the local volatility asymptotics to the MGF near its critical moment; this link will be given as a theorem that is formulated and proved in Section~\ref{section2}. In section~\ref{section3} we give a result relating the (sharp) asymptotics of the implied volatility for large strike and the MGF near its critical moment. Section~\ref{section4}  gives the application of our results to the SVI parametrization of the implied volatility, while Section~\ref{section5} studies extensively the application  the Heston model case. We then give the implication to Stein-Stein model. 

\section{Tauberian relation between the moment explosion and the distribution tails}\label{section1}
\noindent   Throughout this paper $(\omega, \mathcal{F}, (\mathcal{F}_t)_{t\geq 0}, \mathbb{P})$ is a complete filtered probability space satisfying the usual conditions and $\mathbb{E}$ refers to the expectation under $\mathbb{P}$.

In this section, we first  recall the Tauberian theorem that has been first  formulated in \cite{Aly13} and then extended in \cite{Aly15}. We then present a new Tauberian result that will be crucial when deriving the local volatility from the moment generating function.  We present the version that has been given in \cite{Aly15}; it concerns a random variable $X$ satisfying the following assumption:  

 \begin{assumption}\label{hyp_laregdev}
There exist $ \mu^\ast > 0$ and $ \alpha  > 0$ such that 
\begin{enumerate}[label=(\roman*)]
\item $\forall    \mu \in [0, \mu^\ast[, \; \Lambda(\mu) := \ln \mathbb{E} \; e^{\mu X} < + \infty $,  
\item The function $ f (x) ~:  x \longmapsto   \Lambda(\mu^\ast - \frac{1}{x}) $ is $ R_\alpha$ and   $\mathcal{C}^2 ([M, \infty[) $, for some $ M  $ sufficiently large.
\item The function $x \mapsto \frac{1}{\mu^\ast - p^\ast (x)}Ê$, with  $p^\ast (x) ={ \Lambda^\ast}' (x)$, is smoothly varying with index $\frac{1}{\alpha + 1}$, where $\Lambda^\ast$ is the Fenchel-Legendre transform of $\Lambda$.
\end{enumerate}
\end{assumption}
 
 We use the same notation as \cite{Aly15}; $ R_\alpha$ refers to the set of regularly varying function with index $\alpha$. It is also worth noticing that  (see Remark~2.1 in \cite{Aly15})  $\Lambda$ is convex. In particular, $\Lambda^\ast$ is well defined and is given as  $ \Lambda^\ast (x) = p^\ast(x) x - \Lambda ( p^\ast(x)  )$, with $p^\ast$ is the unique solution to $ \Lambda' (p^\ast(x)) = x$.
 
 \begin{theorem}\label{thm22_ext}
Let Assumption~\ref{hyp_laregdev} hold for some random variable $X$. Consider the function  $ \chi_x(.)$ defined, for $x$ sufficiently large,  by 
 \begin{equation}\label{psi_x_z}
 \chi_x (z) = (p^\ast(x)- p^\ast(xz )) x z + \Lambda(p^\ast(xz ))- \Lambda(p^\ast(x)),
 \end{equation}
 and let $g$ be smoothly varying with index $\gamma \in \mathbb{R}$. Then  for any $ \beta \in ]0,1[$, we have the following expansion as $x \to \infty$
\begin{equation}\label{taub_res}
 \int_{ x^{\beta-1} }^\infty g (xz) ~ e^{  \psi_x (z)} d z = \frac{ g(x) \sqrt{ \pi}}{ \sqrt{ 2  x^2  {p^\ast}'(x)}   } \left( 2 + \frac{ \gamma^2 + \frac{\gamma}{\alpha + 1}  + c_\alpha  }{x^2  {p^\ast}'(x) }  +  o ( \frac{1}{x^2 {p^\ast}'(x) }) \right),
\end{equation}
where
 \begin{eqnarray*}
c_\alpha &=&  - \frac{1}{4} (1 +  \frac{1}{\alpha + 1}) (2 +  \frac{1}{\alpha + 1}) + \frac{5}{12} ( 1 +  \frac{1}{\alpha + 1})^2.
\end{eqnarray*}
Furthermore, 
\begin{equation}\label{lim_sup2}
\limsup_{x \to \infty} Ê\left( x^2 \sqrt{ {p^\ast}' (x) } \Big[ \ln \mathbb{P} (X \geq x) + \Lambda^\ast (x) -\frac{ \sqrt{ { p^\ast}' (x) } }{ p^\ast (x) \sqrt{2 \pi}}   \Big] \right) \geq   - \frac{2 + \frac{\alpha}{ (\alpha + 1)^2 } }{24 \mu^\ast   \sqrt{2 \pi} } 
\end{equation}
and 
\begin{equation}\label{lim_inf2}
\liminf_{x \to \infty} ~Ê\left( x^2 \sqrt{ {p^\ast}' (x) } \Big[ \ln \mathbb{P} (X \geq x) + \Lambda^\ast (x) -\frac{ \sqrt{ { p^\ast}' (x) } }{ p^\ast (x) \sqrt{2 \pi}}   \Big] \right)\le  - \frac{2 + \frac{\alpha}{ (\alpha + 1)^2 } }{24 \mu^\ast    \sqrt{2 \pi} } . Ê
\end{equation}
In particular, if   "$\limsup$" equals "$\liminf$"  then we have 
\begin{equation}\label{expans_cdf}
\mathbb{P} (X \geq x) = e^{ - \Lambda^\ast (x)}  \left( \frac{ \sqrt{ { p^\ast}' (x) } }{ p^\ast (x) \sqrt{2 \pi}} - \frac{ 2 + \frac{\alpha}{ (\alpha + 1)^2 } }{  24 \mu^\ast  \sqrt{2 \pi} } ~\frac{1}{x^2 \sqrt{ {p^\ast}'(x) }  }    + o(     \frac{1}{x^2  \sqrt{ {p^\ast}' (x) } }    )   \right).
\end{equation}
\end{theorem}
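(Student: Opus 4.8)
The plan is to separate the argument into an analytic core — the Laplace-type estimate (\ref{taub_res}) — and a Tauberian bridge deducing the one-sided bounds (\ref{lim_sup2})--(\ref{lim_inf2}), hence (\ref{expans_cdf}), from (\ref{taub_res}) together with the elementary identity expressing $\mathbb{E}\,e^{pX}$ as a Laplace transform of the tail. Throughout one leans on the Legendre relations $\Lambda^\ast(x)=p^\ast(x)x-\Lambda(p^\ast(x))$, $\Lambda'(p^\ast(x))=x$, ${\Lambda^\ast}'=p^\ast$, ${p^\ast}'(x)=1/\Lambda''(p^\ast(x))$, and on the calculus of (smoothly) regularly varying functions supplied by Assumption~\ref{hyp_laregdev}: in particular $x^2{p^\ast}'(x)\to\infty$ is the large parameter, $p^\ast(x)\uparrow\mu^\ast$, and differentiating the smoothly varying function $x\mapsto 1/(\mu^\ast-p^\ast(x))$ of index $\tfrac1{\alpha+1}$ (a monotone density argument) gives $x\tfrac{d}{dx}\log{p^\ast}'(x)\to-1-\tfrac1{\alpha+1}$, with analogous limits for the higher derivatives of $p^\ast$.

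The first step is (\ref{taub_res}). A direct computation with the Legendre relations rewrites the phase in (\ref{psi_x_z}) as $\psi_x(z)=\Lambda^\ast(x)-\Lambda^\ast(xz)+p^\ast(x)x(z-1)$, which is concave with $\psi_x(1)=\psi_x'(1)=0$ and $\psi_x''(1)=-x^2{p^\ast}'(x)<0$, so the integrand concentrates at $z=1$. I would then substitute $z=1+w/\sqrt{x^2{p^\ast}'(x)}$, Taylor-expand $\psi_x$ to fourth order in $w$ — the cubic and quartic coefficients being $O\bigl((x^2{p^\ast}'(x))^{-1/2}\bigr)$ and $O\bigl((x^2{p^\ast}'(x))^{-1}\bigr)$ with limiting values dictated by $x\tfrac{d}{dx}\log{p^\ast}'(x)\to-1-\tfrac1{\alpha+1}$ and the corresponding second-order limit — and use $g(xz)=g(x)z^{\gamma}$ up to errors that smooth variation makes negligible at the order considered. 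Performing the resulting Gaussian moments, the odd terms cancel and the even terms assemble into the factor $2+\bigl(\gamma^2+\tfrac{\gamma}{\alpha+1}+c_\alpha\bigr)/(x^2{p^\ast}'(x))+o(\cdot)$, the $\gamma$-free $c_\alpha$ being the pure-phase (quartic plus cubic-squared) contribution of $\psi_x$ and the remaining part coming from the expansion of $g$ and its interaction with the cubic of $\psi_x$. It then remains to discard the part of $\int_{x^{\beta-1}}^\infty$ outside a shrinking neighbourhood of $z=1$: there one uses a uniform quadratic bound $-\psi_x(z)\ge c\,x^2{p^\ast}'(x)(z-1)^2$ on bounded ratios, and for $z$ large the convexity of $\Lambda^\ast$ and the regular variation of $f$, which force $\Lambda^\ast(xz)-p^\ast(x)xz\to+\infty$ fast enough; the cut-off $x^{\beta-1}$ is exactly what keeps these estimates uniform, since $p^\ast,\Lambda^\ast$ are controlled only for large argument.

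The second step is the bridge. Integration by parts yields, for $0<p<\mu^\ast$, an identity $\mathbb{E}\,e^{pX}=B(p)+p\int_{y_0}^{\infty}e^{py}\,\mathbb{P}(X\ge y)\,dy$ with $B(p)$ bounded uniformly in $p$ (the boundary term at $+\infty$ vanishes because $\Lambda$ is finite slightly beyond $p$, and the contribution of $(-\infty,y_0)$ is absorbed into $B$). Setting $p=p^\ast(x)$, $T(y):=e^{\Lambda^\ast(y)}\mathbb{P}(X\ge y)$, $g(y):=\sqrt{{p^\ast}'(y)}/(p^\ast(y)\sqrt{2\pi})$ and $\varepsilon(y):=y^2\sqrt{{p^\ast}'(y)}\,(T(y)-g(y))$ — so that (\ref{lim_sup2})--(\ref{lim_inf2}) assert exactly $\limsup_y\varepsilon(y)\ge C\ge\liminf_y\varepsilon(y)$ for $C=-\bigl(2+\tfrac{\alpha}{(\alpha+1)^2}\bigr)/\bigl(24\mu^\ast\sqrt{2\pi}\bigr)$ — and using $e^{p^\ast(x)y-\Lambda^\ast(y)}=e^{\Lambda(p^\ast(x))}e^{\psi_x(y/x)}$ together with the fact that $\Lambda(p^\ast(x))$ grows like a positive power of $x$ (whence $B(p)e^{-\Lambda(p)}=o((x^2{p^\ast}'(x))^{-1})$), the identity becomes $p^\ast(x)\int_{y_0}^{\infty}e^{\psi_x(y/x)}T(y)\,dy=1+o((x^2{p^\ast}'(x))^{-1})$. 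Choosing $\beta$ small and applying (\ref{taub_res}) to the smoothly varying amplitude $g$ (with index $\gamma$ and coefficient $K:=\gamma^2+\tfrac{\gamma}{\alpha+1}+c_\alpha$, which one computes from Assumption~\ref{hyp_laregdev} and finds equal to $\bigl(2+\tfrac{\alpha}{(\alpha+1)^2}\bigr)/12$) gives $p^\ast(x)\int_{y_0}^{\infty}e^{\psi_x(y/x)}g(y)\,dy=1+\tfrac{K}{2x^2{p^\ast}'(x)}+o(\cdot)$, and subtracting yields
\[
p^\ast(x)\int_{y_0}^{\infty}e^{\psi_x(y/x)}\,\frac{\varepsilon(y)}{y^2\sqrt{{p^\ast}'(y)}}\,dy=-\frac{K}{2x^2{p^\ast}'(x)}+o\!\left(\frac{1}{x^2{p^\ast}'(x)}\right).
\]
The Chernoff bound $0\le T(y)\le 1$ provides the crude polynomial control $-c\,y^2{p^\ast}'(y)\le\varepsilon(y)\le y^2\sqrt{{p^\ast}'(y)}$, which is enough to render the super-exponentially small far tails of the weight $e^{\psi_x(y/x)}$ irrelevant; since $1/(y^2\sqrt{{p^\ast}'(y)})$ is smoothly varying, hence essentially constant over the width-$1/\sqrt{{p^\ast}'(x)}$ bulk of the weight, and since $\int e^{\psi_x(y/x)}\,dy=\sqrt{2\pi/{p^\ast}'(x)}\,(1+o(1))$, the left-hand side equals $\tfrac{p^\ast(x)\sqrt{2\pi}}{x^2{p^\ast}'(x)}\,\overline{\varepsilon}_x$ up to lower order, with $\overline{\varepsilon}_x$ the weighted local average of $\varepsilon$ around $x$. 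Hence $\overline{\varepsilon}_x\to -K/(2\mu^\ast\sqrt{2\pi})=C$, and since a weighted average concentrating near $x$ (with polynomially controlled far tails) is squeezed between the local $\liminf$ and $\limsup$ of $\varepsilon$, this forces $\limsup_y\varepsilon(y)\ge C\ge\liminf_y\varepsilon(y)$, i.e.\ (\ref{lim_sup2})--(\ref{lim_inf2}); and if these coincide then $\varepsilon(y)\to C$, which is (\ref{expans_cdf}).

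The step I expect to be the main obstacle is the \emph{uniformity} on which everything rests: one must promote the pointwise asymptotics of $p^\ast,{p^\ast}',{p^\ast}'',\Lambda^\ast$ (coming from regular/smooth variation via Karamata's representation and the monotone density theorem) to estimates uniform over bounded ratios $y/x$, with quantitative decay as $y/x\to0,\infty$, so that (i) the Laplace expansion can be truncated at fourth order with the stated error and (ii) the amplitudes can be treated as locally constant in the averaging step. A secondary but necessary point is verifying that the functions appearing as amplitudes — chiefly $y\mapsto\sqrt{{p^\ast}'(y)}/p^\ast(y)$ and $y\mapsto 1/(y^2\sqrt{{p^\ast}'(y)})$ — are themselves smoothly varying with the indices used, so that (\ref{taub_res}) applies; this is routine in the calculus of regularly varying functions but uses Assumption~\ref{hyp_laregdev}(iii) to differentiate.
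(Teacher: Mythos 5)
This theorem is stated in the paper without proof --- it is recalled from \cite{Aly13} and \cite{Aly15} --- so there is no in-paper argument to compare you against; judged on its own terms, your reconstruction is sound and follows exactly the route the paper itself deploys one level down (in the proof of Proposition~\ref{tauberian1}): a Laplace/saddle-point expansion of the phase $\chi_x$ for \rif{taub_res} (note the $\psi_x$ appearing in the integrand of \rif{taub_res} is the same function as the $\chi_x$ of \rif{psi_x_z}), and the tail-integral representation of the moment generating function combined with \rif{taub_res} applied to smoothly varying amplitudes for the $\limsup$/$\liminf$ bounds. Your constants are consistent: writing $u=\frac{1}{\alpha+1}$, the Gaussian-moment bookkeeping you describe gives the relative correction $\frac{1}{2}(\gamma^2+\gamma u+c_\alpha)/(x^2{p^\ast}'(x))$, and for the bridge amplitude $g(y)=\sqrt{{p^\ast}'(y)}/(p^\ast(y)\sqrt{2\pi})$ one has $\gamma=-\frac{1+u}{2}$, whence $K=\gamma^2+\gamma u+c_\alpha=\frac{2+u-u^2}{12}=\frac{1}{12}\bigl(2+\frac{\alpha}{(\alpha+1)^2}\bigr)$, so the averaged limit $-K/(2\mu^\ast\sqrt{2\pi})$ is precisely the constant in \rif{lim_sup2}--\rif{lim_inf2}; the squeeze of the concentrating weighted average between the local $\liminf$ and $\limsup$ of $\varepsilon$ is legitimate once the far tails are controlled by $0\le e^{\Lambda^\ast(y)}\mathbb{P}(X\ge y)\le 1$ and the superpolynomial decay of $e^{\chi_x(y/x)}$ away from $y\approx x$. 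Two remarks. First, you tacitly read the bracket in \rif{lim_sup2}--\rif{lim_inf2} as $e^{\Lambda^\ast(x)}\mathbb{P}(X\ge x)-\frac{\sqrt{{p^\ast}'(x)}}{p^\ast(x)\sqrt{2\pi}}$ rather than literally $\ln\mathbb{P}(X\ge x)+\Lambda^\ast(x)-\cdots$; that is the only reading compatible with \rif{expans_cdf}, and it is what your $\varepsilon(y)$ encodes, but it deserves to be stated since the printed statement is not literally what you prove. Second, the obstacle you flag is the genuine one: the two-term expansion with error $o\bigl((x^2{p^\ast}'(x))^{-1}\bigr)$ requires the smooth-variation limits for $p^\ast,{p^\ast}',{p^\ast}'',{p^\ast}'''$ to hold uniformly over ratios $y/x$ in compacta together with quantitative decay outside; Assumption~\ref{hyp_laregdev}(iii) (smooth variation of $x\mapsto 1/(\mu^\ast-p^\ast(x))$) does supply that derivative control, whereas part (ii) alone ($f\in\mathcal{C}^2$) would not, so your appeal to (iii) at that point is essential and correctly placed.
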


The only use we will make of this theorem in this paper is to prove the following result which will be the key to link the local volatility asymptotics to the moment generating function near its critical moment. The proposition is followed by its proof.

\begin{proposition}\label{tauberian1}
Let $X$ be a random variable satisfying  Assumption~\ref{hyp_laregdev}. Assume that "$\limsup$" and "$\liminf$" in (\ref{lim_inf2}) and (\ref{lim_sup2}) are equal. Let $g$ be smoothly varying with index $Ê\gamma \in \mathbb{R}$ such that $ \mathbb{E}  | g(X)| < \infty$. ÊWe have, for $x$ sufficiently large,
\begin{equation}
\frac{ \mathbb{E} ( g(X) e^{( \mu^\ast - \frac{1}{x} ) X} )}{ \mathbb{E} (  e^{( \mu^\ast - \frac{1}{x} ) X} ) } = g( \Lambda' (\mu^\ast - \frac{1}{x}) )     \left( 1 + (\gamma^2 - \gamma) \frac{ \Lambda'' (\mu^\ast - \frac{1}{x}  )  }{  2 { \Lambda'}^2 ( \mu^\ast - \frac{1}{x}) } + o( \frac{ \Lambda'' (\mu^\ast - \frac{1}{x}  )  }{  2 { \Lambda'}^2 ( \mu^\ast - \frac{1}{x}) } Ê )   \right) .
\end{equation}
\end{proposition}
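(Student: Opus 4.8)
The plan is to rewrite both the numerator and denominator of the left-hand side as integrals against the tail distribution of $X$, and then apply Theorem 1.2 to each. First I would note that, since $\mathbb{E}|g(X)|<\infty$ and $g$ is smoothly varying, the exponential tilt $e^{(\mu^\ast-\frac{1}{x})X}$ makes everything integrable for $x$ large. Writing $p = \mu^\ast - \frac{1}{x}$, I would use the layer-cake / integration-by-parts identity
\[
\mathbb{E}\big(h(X)\,e^{pX}\big) = \int_{-\infty}^\infty \big(p\,h(s) + h'(s)\big) e^{ps}\,\mathbb{P}(X\geq s)\,ds
\]
(for $h=g$ and $h\equiv 1$), or more directly change variables $s = \Lambda'(\mu^\ast - \frac{1}{xz})$ so that the saddle point $s=\Lambda'(p)$ corresponds to $z=1$. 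The exponent $ps - \Lambda^\ast(s)$ appearing after inserting the expansion \eqref{expans_cdf} for $\mathbb{P}(X\geq s)$ is, up to the constant $-\Lambda(p)$, exactly $-\chi_x(z)$ from \eqref{psi_x_z}; this is the whole point of the change of variables. Thus both $\mathbb{E}(g(X)e^{pX})$ and $\mathbb{E}(e^{pX})$ are brought to the form $\int_{x^{\beta-1}}^\infty (\text{smoothly varying})\cdot e^{\psi_x(z)}\,dz$ treated in \eqref{taub_res}, modulo controlling the contribution of the region $z < x^{\beta-1}$, which is exponentially negligible because the integrand is dominated by $e^{-c/x^{2}(\text{stuff})}$ there — this is the standard Laplace/Watson truncation argument and I would invoke it as in \cite{Aly15}.

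Next I would apply \eqref{taub_res} twice. For the denominator, the relevant smoothly varying factor has index $\gamma=0$ (after accounting for the Jacobian of the substitution and the prefactor $\sqrt{p^{\ast\prime}(x)}/(p^\ast(x)\sqrt{2\pi})$ from \eqref{expans_cdf}), giving
\[
\mathbb{E}\big(e^{pX}\big) = e^{-\Lambda^\ast(\Lambda'(p)) + p\Lambda'(p)}\cdot\frac{\sqrt{\pi}}{\sqrt{2x^2 p^{\ast\prime}(x)}}\Big(2 + \frac{c_0}{x^2 p^{\ast\prime}(x)} + o(\cdot)\Big) = e^{\Lambda(p)}\Big(1 + O\big(\tfrac{1}{x^2 p^{\ast\prime}(x)}\big)\Big).
\]
For the numerator, $g(X) = g(s)$ contributes an extra smoothly varying factor; since $g$ is smoothly varying of index $\gamma$ and $s = \Lambda'(p) + O(1/(x\sqrt{p^{\ast\prime}}))$ near the saddle, one has $g(s)\approx g(\Lambda'(p))\,(s/\Lambda'(p))^\gamma$, so the effective index in \eqref{taub_res} becomes $\gamma$ and an extra factor $g(\Lambda'(p))$ comes out front. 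Taking the ratio, the leading $2$'s and the common $e^{\Lambda(p)}$ and Jacobian factors cancel, leaving
\[
\frac{\mathbb{E}(g(X)e^{pX})}{\mathbb{E}(e^{pX})} = g(\Lambda'(p))\cdot\frac{2 + \frac{\gamma^2 + \frac{\gamma}{\alpha+1} + c_\alpha}{x^2 p^{\ast\prime}(x)} + o(\cdot)}{2 + \frac{c_0}{x^2 p^{\ast\prime}(x)} + o(\cdot)} = g(\Lambda'(p))\Big(1 + \frac{\gamma^2 + \frac{\gamma}{\alpha+1} + c_\alpha - c_0}{2x^2 p^{\ast\prime}(x)} + o(\cdot)\Big).
\]

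It then remains to identify the combination $\gamma^2 + \frac{\gamma}{\alpha+1} + c_\alpha - c_0$ with $\gamma^2 - \gamma$ and to translate $\frac{1}{x^2 p^{\ast\prime}(x)}$ into $\frac{\Lambda''(p)}{\Lambda'(p)^2}$. The latter is a chain-rule computation: $p^\ast = {\Lambda^\ast}'$ is the inverse of $\Lambda'$, so $p^{\ast\prime}(\Lambda'(p)) = 1/\Lambda''(p)$, and combined with $x = 1/(\mu^\ast - p)$ and Assumption 1.1(iii) (smooth variation of $x\mapsto 1/(\mu^\ast - p^\ast(x))$ with index $\frac{1}{\alpha+1}$) one gets $x^2 p^{\ast\prime}(x) \sim \Lambda'(p)^2/\Lambda''(p)$ up to the precise constant — here I would match powers carefully using that $f(x)=\Lambda(\mu^\ast - 1/x)\in R_\alpha$ forces $\Lambda'$ and $\Lambda''$ to be regularly varying in $1/(\mu^\ast-p)$ with indices $\alpha-1$ and $\alpha-2$ respectively. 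The constant-matching step — verifying $c_\alpha - c_0 = -\gamma(1+\frac{1}{\alpha+1})$ so that $\gamma^2 + \frac{\gamma}{\alpha+1} + c_\alpha - c_0 = \gamma^2 - \gamma$, and pinning down the factor relating $x^2p^{\ast\prime}(x)$ to $\Lambda'^2/\Lambda''$ — is the one place where I expect to have to be genuinely careful rather than schematic, and I regard it as the main obstacle; everything upstream is a routine (if delicate) Laplace-method deployment of Theorem 1.2. I would organize the write-up so that the two applications of \eqref{taub_res} are stated as a lemma with $\gamma$ as a free parameter, then specialize.
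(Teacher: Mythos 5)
Your overall route is the paper's route: integrate by parts against the tail, insert the expansion \eqref{expans_cdf} for $\mathbb{P}(X\ge s)$, bring both $\mathbb{E}(g(X)e^{pX})$ and $\mathbb{E}(e^{pX})$ to the form \eqref{taub_res} after the change of variables $z\mapsto zZ$ with $Z=\Lambda'(\mu^\ast-\tfrac1x)$ (the paper truncates at $c=Z^{\beta}$, $\beta=\tfrac{\alpha}{2(\alpha+1)}$, and shows the boundary term is exponentially negligible, which is the rigorous version of your truncation remark), and then take the ratio of the two expansions. But the step you flag as ``the main obstacle'' and leave undone is exactly where the stated constant is produced, and your bookkeeping as written would give the wrong answer. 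The tail prefactor $f(z)=\frac{\sqrt{p^{\ast\prime}(z)}}{p^\ast(z)\sqrt{2\pi}}+\dots$ is itself regularly varying with index $\delta:=-\frac{\alpha+2}{2(\alpha+1)}$ and multiplies \emph{both} integrands; so in \eqref{taub_res} the denominator enters with index $\delta$ (not $0$) and the numerator with index $\gamma+\delta$ (not $\gamma$). The ratio then carries the constant
\begin{equation*}
\Bigl[(\gamma+\delta)^2+\tfrac{\gamma+\delta}{\alpha+1}+c_\alpha\Bigr]-\Bigl[\delta^2+\tfrac{\delta}{\alpha+1}+c_\alpha\Bigr]
=\gamma^2+2\gamma\delta+\tfrac{\gamma}{\alpha+1}
=\gamma^2-\gamma ,
\end{equation*}
precisely because $2\delta=-\frac{\alpha+2}{\alpha+1}$; with your assignment (index $0$ versus $\gamma$) you would instead get $\gamma^2+\frac{\gamma}{\alpha+1}$, which does not match the proposition. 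So the missing idea is not a delicate computation so much as the observation that the common factor $f$ shifts both indices and its cross term converts $\frac{\gamma}{\alpha+1}$ into $-\gamma$.

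Two smaller points. First, the other place you expected trouble is actually exact: evaluating the expansions at the saddle value $Z=\Lambda'(p)$ and using that $p^\ast$ is the inverse of $\Lambda'$ gives $p^{\ast\prime}(Z)=1/\Lambda''(p)$ and hence $Z^2p^{\ast\prime}(Z)=\Lambda'(p)^2/\Lambda''(p)$ identically, so no regular-variation power matching is needed there. Second, in the ratio you must also account for the term $I^{g'}$ coming from the $h'$ part of the integration by parts; it contributes $\frac{g'(Z)}{p^\ast(Z)}\sim\frac{\gamma}{\mu^\ast}\frac{g(Z)}{Z}$, which is $o\bigl(g(Z)\,\Lambda''/\Lambda'^2\bigr)$ and is absorbed into the error term — worth one line, since your write-up drops it silently after the display of the layer-cake identity.
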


\begin{remark}
Note that under the theorem's assumption,  $ \frac{ \Lambda'' (\mu^\ast - \frac{1}{x}  )  }{  2 { \Lambda'}^2 ( \mu^\ast - \frac{1}{x}) }  = \mathcal{O} ( x^{- \alpha})$.
\end{remark}
\begin{proof}
We shall not assume that $X$ admits a smooth density; if that would be the case we could express  $ \mathbb{E} (U e^{px})$ by integrating with respect to this density function. Instead, we proceed as in \cite{Aly13} by using  the following representation of the exponential function:
\begin{eqnarray*}
g(U)  e^{p U} &=& g(U) 1_{ U \le c}  e^{p U} +  1_{ U > c} g(U)  e^{p U}  \\
&=& g(U) 1_{ U \le c}  e^{p U} + 1_{ U > c} \left(  g(c)  e^{ pc} +   \int_c^U e^{p z} (g'(z)  + p g(z) )  d z  \right)
\\ &=&
 g(U) 1_{ U \le c}  e^{p U} + 1_{ U > c}    g(c)  e^{ pc} + \int_c^\infty e^{p z} (g'(z)  + p g(z) ) 1_{z \le U} d z
 \\ &=&
 g(U \wedge c)    e^{p~ U \wedge c} + \int_c^\infty e^{p z} (g'(z)  + p g(z) ) 1_{z \le U} d z.
\end{eqnarray*}
which holds for any $ U ,  p, c \in \mathbb{R}$. Applying this to $U= X$  we have
\begin{eqnarray}\label{reprentation_exponential}
\mathbb{E} [ g(X) e^{p X} ]&=&   \mathbb{E}  [g(X \wedge c)    e^{p~ X \wedge c}]     + \int_c^\infty e^{p z} ( g'(z) + p g(z)) \mathbb{P}(z \le X) d z.
\end{eqnarray}
Let's denote, for a function $h$ with polynomial growth, 
\[
I^h (p) :=  \int_c^\infty  h(z) e^{p z}~ \mathbb{P}(z\le X ) d z.
\]
With this notation we can write $\mathbb{E} [ g(X) e^{p X} ] $ as
\begin{equation}\label{I_zeta}
\mathbb{E} [ g(X) e^{p X} ]  =   E_g(c)   +   I^{ g' } (p) + p I^{g  } (p),
\end{equation}
where 
\[
E_g (c) =   \mathbb{E}  [g(X \wedge c)    e^{p~ X \wedge c}]    .
\]
The idea here is to choose $c$ to be large but so that $ E_g (c)$ is negligible with respect to the two terms $ I^{ g' } $  and $I^{ g } $. It is worth noticing that under the assumption $ \mathbb{E} | g (X) |=  m \le \infty$, we have
\begin{equation}\label{uper_bound_Rg}
|E_g (c)| \le  \mathbb{E}  |g(X)|     e^{p c}  +  g (c) e^{Êpc} = ( m +  g (c) )e^{Êpc}.
\end{equation}

Now applying  Theorem~\ref{thm22_ext} to $X$ we have
  \[
  \mathbb{P}(z\le X ) = e^{ \Lambda^\ast (z) } f(z),
  \]
   where  $f$ is given as 
\[
f(z) = \frac{ \sqrt{ÊÊ{p^\ast }'(z)Ê} }{ p^\ast (z) \sqrt{2 \pi}} - \frac{   2 + \alpha/ (\alpha + 1)^2          }{ 24 p^\ast  (z) \sqrt{2\pi}} + o( \frac{1}{z^2 }  \sqrt{ÊÊ{p^\ast }'(z)Ê}).
\]
For $x$ sufficiently large, define $ Z \equiv Z(x)$ by
 \begin{equation}\label{Z_x}
 Z  = \Lambda' (t, \mu^\ast - \frac{1}{x} ) 
  \end{equation}
In particular, we have 
\[
 \mu^\ast - \frac{1}{x} = p^\ast  (Z).
\]
 It follows that, for any function $h$ (for which $I^h$ is well defined),   $ I^h ( \mu^\ast  -\frac{1}{x} ) \equiv I^h ( p^\ast (Z)) $ may be written as
\[
I^h( p^\ast (Z)) = \int_{c}^\infty h(z) e^{p^\ast (Z) } e^{ \Lambda^\ast (z) } f(z) d z = Z e^{\Lambda( p^\ast (Z))}
\int_{c/Z}^\infty  h(zZ)  f(zZ) e^{ \chi_x (z)} d z .
\]
We then choose $ c = Z^\beta$,  where $\beta = \frac{\alpha}{2(\alpha + 1)} ~~~( \in ]0,1[$). In particular we have $ Z^\beta \sim \sqrt{ \Lambda ( p^\ast  (Z) )}$ (and hence $ e^{ Z^\beta}$ is negligible with respect to $ e^{\Lambda (p^\ast  (Z)) } $).  Now we can easily see that $f \in R_\frac{-(\alpha + 2) }{ 2(\alpha + 1)}$. Hence $ g f \in R_{\gamma -\frac{\alpha + 2 }{ 2(\alpha + 1)} }$. The first statement of  Theorem~\ref{thm22_ext} ensures that
\[
I^g(p^\ast_t(Z)) =   e^{\Lambda( p^\ast (Z))}  \frac{  g(Z) f(Z) \sqrt{2 \pi}    }{ \sqrt{  {p^\ast}'(Z)  }  }  \left(
1 + \frac{  ( \gamma - \frac{\alpha + 2}{2(\alpha - 1)}  )^2  + ( \gamma - \frac{\alpha + 2}{2(\alpha + 1) }  )/(\alpha+1)   + c_\alpha     }{ Z^2 {p^\ast }'(Z)     }  + o(  \frac{1}{Z^2 {p^\ast }'(Z)} )
\right).
\]
A similar statement holds for $ I^{g'} (p^\ast (Z))$. It follows that for $x$ sufficiently large we have
\begin{equation}\label{ratio_Is}
\frac{I^g }{I^{ z \mapsto 1}} ( p^\ast  (Z)) = g(Z) \left( 1 + \frac{\gamma^2 - \gamma}{ 2 Z^2 { p^\ast}' (Z)} + o(\frac{1 }{Z^2 {p^\ast}' } ) \right)  , 
\end{equation}
and
\begin{equation}\label{ratio_Is2}
\frac{I^{g'} }{I^{ z \mapsto 1}} ( p^\ast  (Z)) = g'(Z) \left( 1 + \frac{ (\gamma-1)^2 - (\gamma-1)}{ 2 Z^2 { p^\ast}' (Z)} + o(\frac{1 }{Z^2 {p^\ast}' } ) \right)  , 
\end{equation}
where we used the fact that $g'$ is smoothly varying with index $\gamma - 1$. The notation $ z \mapsto 1$ refers to the "constant" function mapping $ \mathbb{R}$ into $ \{1\}$.

Now using the fact that $\beta$ is such that $ Z^\beta$ is negligible with respect to $ \Lambda (p^\ast (Z)) $ we have
\begin{eqnarray*}
\frac{  \mathbb{E} [ g(X) e^{  p^\ast (Z) X} ]     }{  \mathbb{E} [   e^{  p^\ast (Z) X} ]   }  
&=& 
 \frac{ E_g ( Z^\beta)  +   I^{ g' } + p^\ast (Z) I^g   }{  E_{z \mapsto 1 }  (Z^\beta) +  p^\ast (Z) I^{ z \mapsto 1}  } 
 \\ &=&
  \frac{ 1}{  p^\ast (Z)}  \frac{I^{g'} }{I^{ z \mapsto 1}} 
+   \frac{I^{ g } }{I^{z \mapsto 1}} + o ( e^{Z^{\beta} - \Lambda (p^\ast (Z))  }) ,
\end{eqnarray*}
where we used (\ref{uper_bound_Rg}) which ensures that $ E_g (Z^\beta)| \le ( m + g(Z^\beta) )  e^{ pZ^\beta } $.  In particular $ \frac{R_g (Z^\beta) }{ I^{ z \mapsto 1} }Ê= o ( e^{p Z^{\beta} - \Lambda (p^\ast (Z))  })$ decays exponentially to 0 as $Z$ goes to $\infty$ (recall that $ Z^{\beta}  \sim \sqrt{ \Lambda (p^\ast (Z) )}$). Hence
\[
\frac{  \mathbb{E} [ g(X) e^{ p^\ast (Z) X} ]     }{  \mathbb{E} [   e^{ p^\ast (Z) X} ]   }   = g(Z) \left( 1 + \frac{\gamma^2 - \gamma}{ 2 Z^2 { p^\ast}' (Z)} + o(\frac{1 }{Z^2 {p^\ast}' } ) \right).
\]
Now from the definition of $p^\ast$ we have $ \Lambda' (p^\ast (Z)) = Z$. It follows that
\[
{p^\ast}'(Z) = \frac{1}{\Lambda'' (p^\ast (Z))} = \frac{1}{\Lambda'' (\mu^\ast - \frac{1}{x}) } .
\]
On the other hand, $ {p^\ast}' \in R_{ -\frac{\alpha + 2}{\alpha + 1}  }$. It follows that
\[
  Z(x)^2 {p^\ast}' (Z(x))  \sim Z(x)^\frac{\alpha}{\alpha + 1} \sim x^{ - \alpha}.
\]
The  proof is completed by replacing $Z$ by $ \Lambda' (\mu^\ast - \frac{1}{x}) $ and ${p^\ast}' (Z)$ by $ 1/\Lambda'' (\mu^\ast - \frac{1}{x})$.
\end{proof}

\section{From moment explosion to the local volatility asymptotics}\label{section2}
\noindent  In this section we study the link between the local volatility asymptotics and and the moment explosion. We consider a stock price process $ S_t$  and  denote $ X_t := \ln (S_t/S_0)$.  Let's  also denote, for $t, p>0$,  
\begin{equation}
\Lambda_+ (t,p) = \ln \mathbb{E} ( e^{ p X_t} ) ~~~~~~~~~\textrm{and } ~~~~~~ \Lambda_- (t,p) = \ln \mathbb{E} ( e^{ -p X_t} ).
\end{equation}
Suppose that for every $t>0$,  $\Lambda_+ (t,.) $ and $\Lambda_- (t,.)$ are finite only on part of the real line; i.e.  there exist $ \mu^\ast_+ (t)$ and $ \mu^\ast_- (t)$ such that 
\[
0<p< \mu^\ast_\pm (t), ~~~\Lambda_\pm (t, p) < 0, ~~~~~~~Ê\textrm{and}~~~ \lim_{ p \to \mu^\ast_\pm (t)} \Lambda_\pm(t, p) = \infty.
\] 
Consider the local volatility function $\Sigma (t,x)$ such that the process $Y$ defined by the stochastic differential equation
\[
d Y_t =(q - \frac{1}{2}  \Sigma^2(t, Y_t)) d t +  \Sigma  (t, Y_t) d W_t ,
\]
generates the same marginal distributions as $X$. The following result links $ \Sigma(t,\pm y)$ for large $y$ to the behavior of $\Lambda_+(t,.)$  and $\Lambda_-(t,.)$ near the critical moments $\mu^\ast_+$ and $\mu^\ast_-$.

 \begin{theorem}\label{mgf_to_lv}
 If $\Lambda_+$ and $ \Lambda_-$ satisfy (ii)-(iii) in Assumption~\ref{hyp_laregdev}, then  for $y$ sufficiently large we have
\begin{equation}\label{localvol_for_mgf}
  \Sigma^2(t, \pm y ) = \sigma^\pm_0 (t) y +
   \frac{  \partial_t \Delta_\pm (t,\nu_\pm(y)) \mp q \mp   \sigma^\pm_0 (t) ( 1 \mp 2 \mu^\ast_\pm   \pm \frac{1}{\nu_\pm (y) }Ê) y /\nu_\pm(y)     }{  \frac{1}{2}  ( (\mu^\ast_\pm   - \frac{1}{\nu_\pm(y)} )^2 \mp (\mu^\ast_\pm   - \frac{1}{\nu_\pm(y)} )) \left(   1 +  (\gamma^2_\pm - \gamma_\pm )\frac{ \nu_\pm(y)^2}{2 y^2 \nu'_\pm (y) }   + o( \frac{ \nu_\pm(y)^2}{2 y^2 \nu'_\pm (y) })     \right) },
\end{equation}
where $ \gamma_\pm = \frac{\alpha_\pm}{\alpha_\pm + 1} $, $ \Delta_\pm (t, x) = \Lambda_\pm( t. \mu^\ast_\pm (t) - \frac{1}{x})$, with 
\[
 \sigma^\pm_0 (t) = \frac{ -2  \partial_t \mu^\ast_\pm (t)      }{ {\mu^\ast_\pm}^2 (t)  \mp  \mu^\ast_\pm  (t)    }
\]
and
\[
\nu_\pm (t, y) := (    \partial_\mu \Lambda_\pm (t, \mu^\ast_\pm (t) - \frac{1}{(.)}Ê) )^{-1} (y).
\]
 \end{theorem}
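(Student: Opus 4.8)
The plan is to translate the identity characterizing the local volatility --- Dupire's formula --- into a statement purely about $\Lambda_{\pm}$ near its critical moment, and then apply Proposition~\ref{tauberian1} twice (once for the numerator, once for the denominator of Dupire's formula, or more precisely for the various conditional expectations that appear). Concretely, recall that if $Y$ reproduces the marginals of $X$ and solves the given SDE, then the Dupire local volatility can be written in terms of call prices as $\Sigma^{2}(t,x) = 2\,(\partial_{t}C + q\,\partial_{x}C)/\partial_{xx}C$, and since $\partial_x C(t,x) = -\mathbb{P}(X_t > x)\,e^{x}$-type expressions (with the exact form depending on whether one differentiates in $K$ or in $k=\ln K$), the whole quotient reduces to ratios of quantities of the form $\mathbb{E}(h(X_t)e^{pX_t})$. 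The first step is therefore to write $\Sigma^{2}(t,\pm y)$ explicitly as such a ratio, being careful about the drift term $q$ and the sign conventions for $\pm$.

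Next, I would evaluate these expectations at the specific moment $p = \mu^{\ast}_{\pm}(t) - 1/x$, which is exactly the regime Proposition~\ref{tauberian1} controls: for a smoothly varying $g$ of index $\gamma$,
\[
\frac{\mathbb{E}\big(g(X_t)\,e^{(\mu^{\ast}_{\pm}-\frac1x)X_t}\big)}{\mathbb{E}\big(e^{(\mu^{\ast}_{\pm}-\frac1x)X_t}\big)} = g\big(\Lambda'_{\pm}(\mu^{\ast}_{\pm}-\tfrac1x)\big)\Big(1 + (\gamma^{2}-\gamma)\,\tfrac{\Lambda''_{\pm}}{2\,{\Lambda'_{\pm}}^{2}} + o\big(\tfrac{\Lambda''_{\pm}}{2\,{\Lambda'_{\pm}}^{2}}\big)\Big).
\]
The key change of variables is $y = \Lambda'_{\pm}(t,\mu^{\ast}_{\pm}(t)-1/x)$, equivalently $x = \nu_{\pm}(t,y)$ using the inverse function $\nu_{\pm}$ defined in the statement; this is what makes $g(\Lambda'_{\pm}(\cdot))$ collapse to $g(y)$ and introduces the $\nu_{\pm}(y)$ and $\nu'_{\pm}(y)$ appearing in the claimed formula (note ${p^{\ast}}'(Z) = 1/\Lambda''(p^{\ast}(Z))$ from the end of the proof of Proposition~\ref{tauberian1}, and $\nu'_{\pm}$ is tied to $\Lambda''_{\pm}$ via the inverse-function theorem). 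One also needs the $t$-derivative: differentiating $\mathbb{P}(X_t \ge y) = e^{-\Lambda^{\ast}_{\pm}(t,y)}(\cdots)$ in $t$ produces $\partial_{t}\Lambda^{\ast}_{\pm}$, and using $\Lambda^{\ast}_{\pm}(t,y) = p^{\ast}y - \Lambda_{\pm}(t,p^{\ast})$ with $p^{\ast}$ the maximizer gives $\partial_{t}\Lambda^{\ast}_{\pm}(t,y) = -\partial_{t}\Lambda_{\pm}(t,p^{\ast})$; this is where $\partial_{t}\Delta_{\pm}(t,\nu_{\pm}(y))$ and the terms involving $\partial_{t}\mu^{\ast}_{\pm}$ (hence $\sigma^{\pm}_{0}$) enter.

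The bulk of the work is then bookkeeping: assemble the numerator $\partial_{t}C + q\,\partial_{x}C$ and the denominator $\partial_{xx}C$, substitute the asymptotics from Proposition~\ref{tauberian1} with the appropriate indices ($\gamma_{\pm} = \alpha_{\pm}/(\alpha_{\pm}+1)$ for the relevant $g$, which is the index of $p^{\ast}$-type or $\sqrt{{p^{\ast}}'}/p^{\ast}$-type factors coming from $f$ in Theorem~\ref{thm22_ext}), expand to the order at which the $o(1)$ error in $\Sigma^{2}$ is reached, and identify the leading term as $\sigma^{\pm}_{0}(t)y$ and the next term as the displayed fraction. The estimate $\Lambda''_{\pm}/{\Lambda'_{\pm}}^{2} = \mathcal{O}(x^{-\alpha}) = \mathcal{O}(y^{-\alpha/(\alpha+1)})$ from the Remark, together with ${p^{\ast}}' \in R_{-(\alpha+2)/(\alpha+1)}$, is what guarantees the correction term is $\mathcal{O}(y^{\alpha_{\pm}/(\alpha_{\pm}+1)})$ relative to the leading $y$ and that all discarded pieces are genuinely $o(1)$.

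I expect the main obstacle to be precisely the $t$-differentiation: Proposition~\ref{tauberian1} is a statement for fixed $t$, so one must justify differentiating the asymptotic expansion of $\mathbb{P}(X_t \ge y)$ (or of the relevant $\mathbb{E}(g e^{pX_t})$) with respect to $t$ and controlling the resulting error terms uniformly --- equivalently, checking that $\partial_{t}$ of the $o(1/(x^{2}\sqrt{{p^{\ast}}'}))$ remainder is still negligible after the substitution $x = \nu_{\pm}(y)$ and after being divided by the (small) denominator $\partial_{xx}C$. This requires either assuming enough joint regularity of $(t,p)\mapsto\Lambda_{\pm}(t,p)$ near $\mu^{\ast}_{\pm}(t)$ or redoing the Tauberian argument of Theorem~\ref{thm22_ext} with a parameter, and is the one place where the "routine calculation" label does not quite apply.
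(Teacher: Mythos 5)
Your proposal takes a genuinely different route from the paper, and the step you yourself flag at the end is a real gap, not a technicality. The paper never works with the call-price form of Dupire's formula and never differentiates an asymptotic expansion in $t$ or in the strike. Instead it applies It\^o's formula to $e^{pY_t}$ to obtain the \emph{exact} identity $\partial_t\psi^1_+(t,p)=q\,\psi^1_+(t,p)+\tfrac12(p^2-p)\,\mathbb{E}\big(\Sigma^2(t,Y_t)e^{pY_t}\big)$, i.e.\ the moment transform of the forward equation. Evaluating at $p=\mu^\ast_+(t)-\tfrac1x$ and writing $\Delta_+(t,x)=\ln\psi^1_+(t,\mu^\ast_+(t)-\tfrac1x)$ turns this into the exact relation (\ref{dt_Delta1}), in which every $t$-derivative falls on $\Lambda_+$ and $\mu^\ast_+$ (smooth by assumption), and the only asymptotic input is Proposition~\ref{tauberian1} applied to the single ratio $\psi^{\Sigma^2(t,\cdot)}_+/\psi^1_+$. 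A two-step bootstrap then finishes: the first pass gives $\Sigma^2(t,y)=\sigma^+_0(t)\,y+\Sigma_1(t,y)$ with $\Sigma_1=\mathcal{O}(y^{\alpha/(\alpha+1)})$; substituting this decomposition back into (\ref{dt_Delta1}) and using the exact identity $\psi^{Id}_+/\psi^1_+=x^2\partial_x\Delta_+(t,x)$ isolates $\Sigma_1$, and a second application of Proposition~\ref{tauberian1} (now to $\Sigma_1$, of index $\gamma_\pm=\alpha_\pm/(\alpha_\pm+1)$) yields the displayed fraction with its $(\gamma_\pm^2-\gamma_\pm)$ correction. The negative wing is handled by repeating the argument for $\tilde Y=-Y$.

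By contrast, your plan requires three things that are not available from the stated results: (i) density-level asymptotics of $X_t$ for $\partial_{xx}C$, whereas Theorem~\ref{thm22_ext} only controls the tail and the paper is explicitly careful never to assume a density exists (see the opening of the proof of Proposition~\ref{tauberian1}); (ii) a justification for differentiating the tail expansion in $t$, which you acknowledge as the main obstacle but only propose to resolve by assuming joint regularity or redoing the Tauberian theorem with a parameter --- that is precisely the hard step, and it is left open; (iii) the claimed reduction of Dupire's quotient to ratios $\mathbb{E}\big(h(X_t)e^{pX_t}\big)$ near the critical moment is not automatic, since $\partial_xC$ and $\partial_{xx}C$ are local in the strike, and converting them into moment-transform ratios is exactly the content of the It\^o/forward-equation step you skip. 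Your change of variables $y=\Lambda'_\pm(t,\mu^\ast_\pm-\tfrac1x)$, the role of $\nu_\pm$ and of $\nu'_\pm$ versus $\Lambda''_\pm$, and the error size $\mathcal{O}(y^{-\alpha_\pm/(\alpha_\pm+1)})$ are all in the right spirit, but the argument as proposed does not close: the missing idea is the exact forward identity for $\psi^1_\pm$, which eliminates any differentiation of asymptotic expansions, together with the bootstrap that extracts the sharp second-order term.
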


\begin{remark}
Note that the second term in the right hand side of (\ref{localvol_for_mgf}) is $ \mathcal{O} (y^\frac{-\alpha_\pm}{\alpha_\pm+1} )$. It is also worth noticing that $\frac{ \nu_\pm(y)^2}{2 y^2 \nu'_\pm (y) } \sim y^\frac{-\alpha_\pm}{\alpha_\pm +1}  $.
\end{remark}
\begin{proof} 
 Applying It\^o's formula to the process $ e^{ p Y_t}$ we find that, for $t>0$ and $p < \mu^\ast_+ (t)$,
\begin{equation}\label{first_edp}
\partial_t \psi^1_+ (t, p) =q \psi^1_+ (t, p) +  \frac{1}{2}   ( p^2 - p) \psi^{\Sigma^2(t,.)}_+ (t,p) ,
\end{equation}
where  $  \psi^1_+ (t, p)  :=  \mathbb{E}~( e^{ p Y_t} ) $, and for a function $f$  the notation $\psi^f(.)$ refers to
\[
\psi^f_+ (t,p) := \mathbb{E}~( f(Y_t) e^{ p Y_t} ).
\]
 Define $ \Delta_+ (t,x) := \ln \psi^1_+ (t, \mu^\ast_+ (t) - \frac{1}{x}))Ê$; we have 
 \[
 \partial_t    \Delta_+ (t,x) = \frac{   \partial_t  \psi^1_+ (t, \mu^\ast_+ (t) - \frac{1}{x})    }{   \psi^1_+ (t, \mu^\ast_+ (t) - \frac{1}{x})} 
 + \frac{   \partial_t \mu^\ast_+ (t) \partial_\mu   \psi^1_+ (t, \mu^\ast_+ (t) - \frac{1}{x})    }{   \psi^1_+ (t, \mu^\ast_+ (t) - \frac{1}{x})}.
 \]
 Noticing that $ \partial_x \Delta_+ (t,x) = \frac{    \partial_\mu   \psi^1_+ (t, \mu^\ast_+ (t) - \frac{1}{x}))    }{  x^2 \psi^1_+ (t, \mu^\ast_+ (t) - \frac{1}{x}))}  $ and dividing both sides of (\ref{first_edp}) by $ \psi^1_+ (t, \mu^\ast_+ (t) - \frac{1}{x}) $ we have
\begin{equation}\label{dt_Delta1}
\partial_t \Delta_+ (t,x)  =  q + \frac{1}{2}  ( (\mu^\ast_+ (t)  - \frac{1}{x} )^2 - (\mu^\ast_+ (t) - \frac{1}{x} ))  \frac{ \psi^{\Sigma(t,.)}_+ }{ \psi^1_+} (  t, \mu^\ast_+ (t) - \frac{1}{x} )+ \partial_t \mu^\ast_+ x^2 \partial_x \Delta_+(t,x).
\end{equation}
As highlighted in the introduction, we can actually derive an asymptotic formula for the local volatility by integrating and differentiating (\ref{expans_cdf}) with respect to $t$ and $x$. Without going through the formula, we can prove that $ \Sigma^2 (t,.)$ is smoothly varying  which allows us to apply Proposition~\ref{tauberian1}. This gives
 
\[
  \frac{ \psi^{\Sigma^2(t,.)} }{ \psi^1} (  t, \mu^\ast_+ - \frac{1}{x} ) = \Sigma(t,x^2 \partial_x \Delta (t,x) ) (  1 + \mathcal{O} (x^{-\alpha_+})).  
  \]
  It follows that
  \[
  \Sigma^2 (t, y       ) = \frac{ -2  \partial_t \mu^\ast_+ y + 2  \partial_t \Delta_+ (t, \nu (y) )     }{ {\mu^\ast_+}^2 (t)  - \mu^\ast_+  (t) + \frac{1}{\nu (y)} (1 - 2 \mu^\ast_+ + \frac{1}{\chi(y)} )   }(1 + \mathcal{O} (y^\frac{-\alpha_+}{\alpha_+ + 1})) ,
  \]
  where
  \[
  \nu(y) = [( .)^2 \partial_x \Delta_+ (t,.) ]^{-1} (y)~  \sim ~~ y^\frac{1}{\alpha_+ + 1} .
  \]
  Hence
  \[
  \Sigma^2 (t, y  ) = \frac{ -2  \partial_t \mu^\ast_+ (t)      }{ {\mu^\ast_+}^2 (t)  - \mu^\ast_+  (t)    } y + \mathcal{O} (y^\frac{\alpha}{\alpha + 1}).
  \]
    This means that 
    \[
    \Sigma^2(t,y ) = \frac{ -2  \partial_t \mu^\ast_+ (t)      }{ {\mu^\ast_+}^2 (t)  - \mu^\ast_+  (t)    } y + \Sigma_1 (t,y),
    \]
    where $ \Sigma_1 (t,y)= \mathcal{O} (y^\frac{\alpha}{\alpha + 1}) $. Rewriting the partial differential equation (\ref{dt_Delta1}) we have 
    \[
\partial_t \Delta_+ (t,x)  =  q + \frac{1}{2}  ( (\mu^\ast_+   - \frac{1}{x} )^2 - (\mu^\ast_+   - \frac{1}{x} )) (  \sigma_0 (t)  \frac{ \psi^{ Id}_+ }{ \psi^1_+}+  \frac{ \psi^{\Sigma_1(t,.)}_+ }{ \psi^1_+} (  t, \mu^\ast_+  - \frac{1}{x} ) )+ \partial_t \mu^\ast_+ x^2 \partial_x \Delta_+(t,x),
\]
where $ Id: z \mapsto z$ and 
\[
 \sigma_0 (t) = \frac{ -2  \partial_t \mu^\ast_+ (t)      }{ {\mu^\ast_+}^2 (t)  - \mu^\ast_+  (t)    }.
\]
Observe that, for any $\mu > 0$,
\[
\psi^{ Id}_+ (t, \mu)  :=   \mathbb{E} ( Y_t  e^{ \mu Y_t} ) = \partial_\mu \psi^{1}_+ (t, \mu).
\] 
In particular 
\[
\psi^{Id}_+   (  t, \mu^\ast_+  - \frac{1}{x} ) / \psi^{1}_+   (  t, \mu^\ast_+  - \frac{1}{x} )  = x^2 \partial_x \Delta_+ (t, x).
\]
It follows that
\[
 \partial_t \Delta_+ (t,x)  =  q +  \frac{1}{2 } (1 - 2 \mu^\ast_+  + \frac{1}{x}Ê) \sigma_0 (t) x \partial_x \Delta_+ (t,x)+
 \frac{1}{2}  ( (\mu^\ast_+   - \frac{1}{x} )^2 - (\mu^\ast_+   - \frac{1}{x} ))    \frac{ \psi^{\Sigma_1(t,.)}_+ }{ \psi^1_+} (  t, \mu^\ast_+  - \frac{1}{x} )  .
\]
Using Proposition~\ref{tauberian1}  once more we have
\[
\frac{ \psi^{\Sigma_1(t,.)}_+ }{ \psi^1_+} (  t, \mu^\ast_+  - \frac{1}{x} ) = \Sigma_1(t, x^2 \partial_x \Delta (t,x) ) (1 + (\gamma^2 - \gamma) \frac{ \Lambda'' ( \mu^\ast - \frac{1}{x}   )   }{ 2 {\Lambda'}^2 ( \mu^\ast - \frac{1}{x}   )  }    + o  (   \frac{ \Lambda'' ( \mu^\ast - \frac{1}{x}   )   }{ 2 {\Lambda'}^2 ( \mu^\ast - \frac{1}{x}   )  }       )),
\]  
where $ \gamma$ is the index of $ \Sigma_1 (t, .)$. It follows that
\begin{eqnarray*}
\partial_t \Delta_+ (t,\nu_+(y)) &=&  \sigma_0 (t) ( 1 - 2 \mu^\ast_+ + 1/\nu_+ (y)) y /\nu_+ (y) + \\ &&
\frac{1}{2}  ( (\mu^\ast_+   - \frac{1}{\nu_+(y)} )^2 - (\mu^\ast_+   - \frac{1}{\nu_+(y)} ))  \Sigma_1 (t,y) (1 +  (\gamma^2 - \gamma)\frac{ \nu_+(y)^2}{2 y^2 \nu'_+ (y) }    + o  (  \frac{ \nu_+(y)^2}{2 y^2 \nu'_+ (y) }  )      ).
\end{eqnarray*}
Hence $ \gamma =\frac{  \alpha_+}{\alpha_+ + 1} $ and we finally have
\[
 \Sigma_1 (t,y) = \frac{  \partial_t \Delta_+ (t,\nu_+(y)) - q-   \sigma^+_0 (t) ( 1 - 2 \mu^\ast_+   - \frac{1}{\nu_+ (y) }Ê) y /\nu_+(y)     }{  \frac{1}{2}  ( (\mu^\ast_+   - \frac{1}{\nu_+(y)} )^2 - (\mu^\ast_+   - \frac{1}{\nu_+(y)} )) \left(   1 +  (\gamma^2 - \gamma)\frac{ \nu_+(y)^2}{2 y^2 \nu'_+ (y) }  + o (  \frac{ \nu_+(y)^2}{2 y^2 \nu'_+ (y) }    )     \right) }  .
\]

  To prove the result for $ \Sigma (t, -k)$, we consider the process $ \tilde{Y}_t = - Y_t$. This process is given by the stochastic differential equation
  \[
d \tilde{Y}_t =(-q + \frac{1}{2}  \tilde{\Sigma}^2(t, \tilde{Y}_t)) d t + \tilde{\Sigma}  (t, \tilde{Y}_t) d \tilde{W}_t ,
\]
 where $\tilde{W}_t = - W_t  $ and  $\tilde{\Sigma}(t,y) =\Sigma (t, -y  ) $. Applying It\^o's formula to the process $ e^{ p \tilde{Y}_t}$ we find that
\[
\partial_t \psi^1_- (t, p) =-q \psi^1_- (t, p) +  \frac{1}{2}   ( p^2 +p) \psi^{\tilde{\Sigma}^2(t,.)}_- (t,p) ,
\]
 where $ \psi^f_- (t,p) := \mathbb{E} ( f(\tilde{Y}_t e^{ p \tilde{Y}_t} )  $. Define $ \Delta_- (t,x) := \ln \psi^1_- (t, \mu^\ast_- (t) - \frac{1}{x}))Ê$. We have
\[
\partial_t \Delta_- (t,x)  = -q+ \frac{1}{2}  ( (\mu^\ast_- (t)  - \frac{1}{x} )^2 + (\mu^\ast_- (t) - \frac{1}{x} ))  \frac{ \psi^{\tilde{\Sigma}^2(t,.)}_- }{ \psi^1_+} (  t, \mu^\ast_- (t) - \frac{1}{x} )+ \partial_t \mu^\ast_- x^2 \partial_x \Delta_-(t,x).
\]
We find in the same way as for $\Sigma(t,y)$ that
 \[
  \Sigma^2(t, - y ) = \sigma^-_0 (t) y +
   \frac{  \partial_t \Delta_- (t,\nu_-(y)) + q +   \sigma^-_0 (t) ( 1 + 2 \mu^\ast_-   - \frac{1}{\nu_- (y) }Ê) y /\nu_-(y)     }{  \frac{1}{2}  ( (\mu^\ast_-   - \frac{1}{\nu_-(y)} )^2 - (\mu^\ast_-   - \frac{1}{\nu_-(y)} )) \left(   1 +  (\gamma^2_- - \gamma_- )\frac{ \nu_-(y)^2}{2 y^2 \nu'_- (y) }   + o( \frac{ \nu_-(y)^2}{2 y^2 \nu'_- (y) })     \right) }
 \]
 with $\sigma^-_0$ and $\nu_-$ are given by the theorem.
  \end{proof}

\section{From moment explosion to  Implied volatility asymptotics }\label{section3}
\noindent Under Black Scholes model, the prices of standard European options are given explicitly via the famous Black-Scholes formula.  In particular the price of  an European Call option with strike $K$ and maturity $t$ is given by:
\[
 \mathbb{E} ( S_0 e^{    - \frac{\sigma^2}{2}Ê t + \sigma W_t}  - K)_+ = N(  \frac{1}{  \sigma \sqrt{t}  }  ( -k   + \frac{1}{2}   \sigma^2 t  )) -  e^{k }  N(  \frac{1}{ \sigma \sqrt{   t}}  ( -k   - \frac{1}{2}  \sigma^2 t   )) ,
\]
where $N$ denotes the cumulative distribution function of the standard Gaussian law, $ k = \ln (K/S_0)$ and  $ \sigma $ is the volatility parameter. For any price quote there exists a unique volatility parameter (that depends on $ t$ and $ k$) such that the Black-Scholes formula gives the same price: this is called the \textit{Black-Scholes implied volatility}.

Suppose that the market is described by a model $X$  (for ($ \log(S_t/S_0)   $)) under the  measure $\mathbb{P}$. Today's price of a Call option with maturity $t$ and  and strike $e^{k }$ is then given as function of the Black-Scholes implied volatility as follows:
\begin{equation}
 \mathbb{E} ( e^{ X_t} - e^k)_+   = N(  \frac{1}{ \sqrt{t} \sigma (t,k) }  ( -k + \frac{1}{2}  t  \sigma^2(t,k)   )) -  e^{k  }  N(  \frac{1}{   \sqrt{t}  \sigma(t,k)}  ( -k - \frac{1}{2} t \sigma^2 (t,k)  )).
\end{equation}
Differentiating both sides with respect to $k$ we get
\[
 -e^{k}~ \mathbb{P} (X_t \geq k) =  \partial_k  C_{BS}  ( k,  \sqrt{t} \sigma (t,k)) + \sqrt{t} \partial_k   \sigma (t,k)   C_{BS}  ( k, \sqrt{t}  \sigma (t,k)),
\]
where the notation $C_{BS}(k,v)$ refers to 
\[
C_{BS}(k, v) := N(  \frac{1}{ v }  ( -k + \frac{1}{2} v^2   )) -  e^{k  }  N(  \frac{1}{  v}  ( -k - \frac{1}{2} v^2  )).
\]
Hence
\[
 \mathbb{P} (X_t \geq k) = \tilde{N} (  \frac{1}{  \sqrt{t}  \sigma(t,k)}  ( k + \frac{1}{2} t  \sigma^2 (t,k)  )) -  \sqrt{t}  \partial_k \sigma  (t, k)    N' (  \frac{1}{ \sqrt{t}  \sigma(t,k)}  ( k + \frac{1}{2} t  \sigma^2 (t,k)   )) , 
\]
where $ \tilde{N} (x) =1- N(x)$. We clearly see that when $|k| $ is large, the quantity $ d(k) := \frac{1}{ \sqrt{t}  \sigma(t,k)}  ( k + \frac{1}{2} t  \sigma^2 (t,k)   ) \to \infty$.  The following bounds hold for any positive $y$ and obtained via an integration by part:
\[
 \frac{ N'(y) }{ y} (  1 - \frac{1}{ y^2} ) \le \tilde{N} (y) \le \frac{ N'(y) }{ y}.
\]
In particular, we have
\[
\tilde{N} (y) = \frac{N'(y)}{y} ( 1 + \mathcal{O} ( y^{-2} )).
\]
 Applying this to the black Scholes formula we get  
\begin{equation}\label{cumuldistrIV_r}
\mathbb{P} (X_t \geq k) =\frac{1}{ \sqrt{2 \pi}} e^{ - \frac{1}{2} d^2 (k) }Ê \left(  \frac{1}{d (k) } - \sqrt{t} \partial_k \sigma (t,k) + \mathcal{O} ( d(k)^{-3}    )     \right),  
\end{equation}
 Note that
\[
d^2 (k) = \frac{k^2}{ t  \sigma^2 (t,k)  } + \frac{1}{4} t \sigma^2 (t,k) + k.  
\]

The left wing is obtained in a very similar way; let $k$ be large enough and consider a European Put option with strike $e^{ -  k}$ and maturity $t$. Its price is given by
\begin{equation}
 \mathbb{E} ( e^{-k} - e^{ X_t}  )_+   = e^{-k} N( -\tilde{d} (k) ) -    N(-\tilde{d} (k) - \sqrt{t} \tilde{\sigma} (t, k)  ),
\end{equation}
where $\tilde{\sigma} (t,k) = \sigma (t, - k )$ and 
\[
\tilde{d} (k) =  \frac{1}{ \sqrt{t}  \tilde{\sigma} (t,k)}  ( k - \frac{1}{2} t  \tilde{\sigma} ^2 (t,k)   )
\]
Differentiating both sides with respect to $k$ we get 
\[
 \mathbb{P} (-X_t \geq k) = \tilde{N} ( \tilde{d} (t,k) ) -  \sqrt{t}  \partial_k  \tilde{\tilde{\sigma} }  (t,k)     N' ( \tilde{d} (t,k)   ) . 
\]
The bounds derived above hold also for  $  \tilde{d} (t,k) $, which gives 
\begin{equation}\label{cumuldistrIV_l}
\mathbb{P} (-X_t \geq k) =\frac{1}{ \sqrt{2 \pi}} e^{ - \frac{1}{2} \tilde{d}^2 (k) }Ê \left(  \frac{1}{\tilde{d} (x) } -\sqrt{t} \partial_k \tilde{\sigma}  (t,k) + \mathcal{O} ( \tilde{d}(k)^{-3}    )     \right).
\end{equation}

Using the bounds (\ref{cumuldistrIV_r}) and (\ref{cumuldistrIV_l}), the next theorem links the left/right wings of the Black-Scholes implied volatility to the moment generating function of $-X_t/X_t$.
\begin{theorem}\label{from_mgf_to_iv}
Suppose that for any $t > 0$,  $X_t$ and  $-X_t$ satisfy  Assumption~\ref{hyp_laregdev}. For $k$ sufficiently large we have
\begin{equation}
 t  \sigma^2 (t,\pm k)  = 4 \Lambda^\ast_\pm (t,k) +2 \tilde{c}^\pm_t (k) \mp 2k - 4 \sqrt{ ( \Lambda^\ast_\pm(t,k) +\frac{  \tilde{c}^\pm_t (k)}{2})  (  \Lambda^\ast_\pm(t,k) +\frac{  \tilde{c}^\pm_t (k)}{2} \mp k)  }    + \mathcal{O} (k^\frac{-\alpha_\pm \wedge 1}{\alpha_\pm+1} ) .
\end{equation}
with the notations $ \Lambda_\pm (t, \mu) := \ln \mathbb{E} e^{ \mu ( \pm X) } $,  $ \Lambda^\ast_\pm (t,.)$ is the Fenchel-Legendre transform of $\Lambda_\pm (t, .) $, $\alpha_\pm$ is the index of the regularly varying function $ x \mapsto \Lambda_\pm (t, \mu^\ast_\pm (t) - \frac{1}{x} )$ and $\tilde{c}_\pm$ is defined by
\begin{equation}
\tilde{c}^\pm_t (k) =    - \ln (  k \partial_{kk} \Lambda^\ast_\pm (t,k))+ 2 \ln\left(\sqrt{  \frac{ \mu^\ast_\pm (t) }{2} }Ê- \frac{\mu^\ast_\pm (t) }{2} \sqrt{  \mp2 + 4 \mu^\ast_\pm (t) - 4 \sqrt{ \mu^\ast_\pm ( \mu^\ast_\pm \mp 1) }   }  Ê\right).
\end{equation}
\end{theorem}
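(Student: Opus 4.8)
The plan is to \emph{match, for $k\to+\infty$, two asymptotic expansions of the tail $\mathbb{P}(\pm X_t\ge k)$}: the one read off the Black--Scholes parametrisation of the price, i.e.\ (\ref{cumuldistrIV_r}) for $X_t$ (resp.\ (\ref{cumuldistrIV_l}) for $-X_t$), and the one furnished by the Tauberian expansion (\ref{expans_cdf}) of Theorem~\ref{thm22_ext} applied to $X_t$ (resp.\ $-X_t$), in which $p^\ast(k)=\partial_k\Lambda^\ast_\pm(t,k)\to\mu^\ast_\pm(t)$ and ${p^\ast}'(k)=\partial_{kk}\Lambda^\ast_\pm(t,k)$. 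I describe the right wing in detail; the left wing follows verbatim after replacing $X_t$ by $-X_t$, using (\ref{cumuldistrIV_l}) together with $\tilde d^2(k)=\frac{k^2}{t\sigma^2(t,-k)}-k+\frac{t\sigma^2(t,-k)}{4}$ in place of $d^2(k)=\frac{k^2}{t\sigma^2(t,k)}+k+\frac{t\sigma^2(t,k)}{4}$, exactly as in the proof of Theorem~\ref{mgf_to_lv}; the signs $\mp$ in the statement are just the bookkeeping of this $\pm k$ and of the $\mp1$ inside $\sqrt{\mu^\ast_\pm(\mu^\ast_\pm\mp1)}$.

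First I would pin down the leading order. Comparing the exponential rates in the two expansions and using $\Lambda^\ast_+(t,k)=\mu^\ast_+k+\mathcal{O}(k^{\alpha_+/(\alpha_++1)})$ (a consequence of Assumption~\ref{hyp_laregdev}(iii), recalled in the introduction) forces $\tfrac12 d^2(k)\sim\mu^\ast_+k$; combined with $d^2=\frac{k^2}{v^2}+k+\frac{v^2}{4}$, $v^2:=t\sigma^2(t,k)$, this yields Lee's moment formula $v^2=\beta_+k+\mathcal{O}(k^{\alpha_+/(\alpha_++1)})$ with $\beta_+=-2+4\mu^\ast_+-4\sqrt{\mu^\ast_+(\mu^\ast_+-1)}$, and, crucially, the algebraic identity $(\beta_++2)^2=8\mu^\ast_+\beta_+$. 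Feeding this back into $d(k)$ and $\partial_k\sigma(t,k)$ shows that the Black--Scholes prefactor obeys $\tfrac1{d(k)}-\sqrt t\,\partial_k\sigma(t,k)=\tfrac{C_+}{\sqrt k}\bigl(1+\mathcal{O}(k^{-1/(\alpha_++1)})\bigr)$ with $C_+=\tfrac1{\sqrt{2\mu^\ast_+}}-\tfrac{\sqrt{\beta_+}}{2}>0$, the closed form of $C_+$ following after simplification with $(\beta_++2)^2=8\mu^\ast_+\beta_+$.

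Then I would take logarithms of the equality of the two expansions of Step~1 and isolate $d^2(k)$:
\[
\tfrac12 d^2(k)=\Lambda^\ast_+(t,k)+\ln\!\frac{\tfrac1{\sqrt{2\pi}}\bigl(\tfrac1{d(k)}-\sqrt t\,\partial_k\sigma(t,k)\bigr)}{\sqrt{\partial_{kk}\Lambda^\ast_+(t,k)}\,/\,\bigl(p^\ast(k)\sqrt{2\pi}\bigr)}+\mathcal{O}\bigl(k^{-(\alpha_+\wedge1)/(\alpha_++1)}\bigr),
\]
where the error is the worst of the relative $\mathcal{O}(d^{-2})=\mathcal{O}(k^{-1})$ of (\ref{cumuldistrIV_r}), the $\mathcal{O}(k^{-1/(\alpha_++1)})$ of the prefactor, and the $\mathcal{O}(k^{-\alpha_+/(\alpha_++1)})$ of (\ref{expans_cdf}) (the last because ${p^\ast}'\in R_{-(\alpha_++2)/(\alpha_++1)}$, so $k^2{p^\ast}'\in R_{\alpha_+/(\alpha_++1)}$). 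By Step~2 and $p^\ast(k)\to\mu^\ast_+$, the logarithm equals $\ln(\mu^\ast_+C_+)-\tfrac12\ln\bigl(k\,\partial_{kk}\Lambda^\ast_+(t,k)\bigr)+o(1)$, which is precisely $\tfrac12\tilde c^+_t(k)+o(1)$ since $\mu^\ast_+C_+=\sqrt{\mu^\ast_+/2}-\tfrac{\mu^\ast_+}{2}\sqrt{\beta_+}$. Writing $L_+:=\Lambda^\ast_+(t,k)+\tfrac12\tilde c^+_t(k)$ this gives $d^2(k)=2L_++\mathcal{O}\bigl(k^{-(\alpha_+\wedge1)/(\alpha_++1)}\bigr)$; multiplying by $v^2$ turns it into the quadratic $\tfrac14 v^4-(2L_+-k)v^2+k^2=\mathcal{O}\bigl(v^2 k^{-(\alpha_+\wedge1)/(\alpha_++1)}\bigr)$, whose homogeneous roots are $4L_+-2k\pm4\sqrt{L_+(L_+-k)}$ (using $(2L_+-k)^2-k^2=4L_+(L_+-k)$). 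Since $v^2=\mathcal{O}(k)$ one selects the smaller root, and since $v^2,L_+,k$ are all comparable to $k$ the perturbation propagates with a bounded factor, so $t\sigma^2(t,k)=4L_+-2k-4\sqrt{L_+(L_+-k)}+\mathcal{O}\bigl(k^{-(\alpha_+\wedge1)/(\alpha_++1)}\bigr)$, which is the claimed formula once $4L_+=4\Lambda^\ast_+(t,k)+2\tilde c^+_t(k)$ is substituted.

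The hard part is the self-referential nature of Steps~2--3: the Black--Scholes prefactor involves $d(k)$ and $\partial_k\sigma(t,k)$, i.e.\ the very unknown $\sigma(t,\cdot)$, so one must first bootstrap the leading order (Lee's formula \cite{lee04}, with the sharp correction rate $\mathcal{O}(k^{\alpha_\pm/(\alpha_\pm+1)})$ delivered by the regular variation in Assumption~\ref{hyp_laregdev}), then verify the identity $(\beta_\pm\pm2)^2=8\mu^\ast_\pm\beta_\pm$ that collapses the constant in $\tilde c^\pm_t$ to the stated closed form, and finally track how the $o(1)$ error in the exponent is transmitted---nonlinearly, through the quadratic in $v^2$---to an $o(1)$ error of the announced order $k^{-(\alpha_\pm\wedge1)/(\alpha_\pm+1)}$ in $t\sigma^2(t,\pm k)$. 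All the intermediate computations (the evaluation of $d$ and $\partial_k\sigma$ at leading order, the discriminant simplification, the error accounting) are routine.
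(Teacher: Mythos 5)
Your proposal is correct and follows essentially the same route as the paper: match the Tauberian expansion (\ref{expans_cdf}) of the tail against the Black--Scholes representations (\ref{cumuldistrIV_r})--(\ref{cumuldistrIV_l}), take logarithms, bootstrap the leading order via Lee's moment formula to evaluate the prefactor (which yields the closed form of $\tilde{c}^\pm_t$), and then solve the quadratic in $t\sigma^2$ selecting the root consistent with that leading order. The only difference is cosmetic ordering — you compute the prefactor constant $C_+$ before forming the quadratic, while the paper first factors the quadratic with the implicit correction $c^+_t(k)$ and then refines it to $\tilde{c}^+_t(k)$ — and your error accounting reproduces the stated rate $\mathcal{O}\bigl(k^{-(\alpha_\pm\wedge 1)/(\alpha_\pm+1)}\bigr)$.
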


\begin{proof}
We have from Theorem~\ref{thm22_ext}
\[
\mathbb{P} (X_t \geq k) = e^{ - \Lambda^\ast_+ (t,k)}  \left( \frac{ \sqrt{ { \partial_x p^\ast_+} (t,k) } }{ p^\ast_+ (t,k) \sqrt{2 \pi}} - \frac{ 2 + \frac{\alpha_+}{ (\alpha_+ + 1)^2 } }{  24 \mu^\ast_+  \sqrt{2 \pi} } ~\frac{1}{k^2 \sqrt{ \partial_x {p^\ast_+}(t,x) }  }    + o(     \frac{1}{k^2  \sqrt{ \partial_x {p^\ast_+} (t,k) } }    )   \right).
\]
Comparing this with (\ref{cumuldistrIV_r}) we see that
\begin{equation}\label{first_equation}
  \frac{k^2}{ t  \sigma^2 (t,k)  } + \frac{1}{4} t \sigma^2 (t,k) + k -2 \ln (  \frac{1}{d (k) } - \sqrt{t} \partial_k \sigma   )  + \mathcal{O} (k^{-1} ) = 2 \Lambda^\ast_+ (t,k) - 2\ln( \frac{ \sqrt{ \partial_x { p^\ast_+}  } }{ p^\ast_+ (t,k)  }   ) + \mathcal{O} (k^\frac{-\alpha_+}{\alpha_++1} ) .
\end{equation}
Hence
\begin{eqnarray*}
 \frac{1}{ 4t  \sigma^2 (t,k)  }  &\times& \left(   t  \sigma^2 (t,k)  - 2 (   2 \Lambda^\ast_+ (t,k) - k - 2 \sqrt{  \Lambda^\ast_+(t,k) (  \Lambda^\ast_+(t,k) - k)  }  )      \right) \times
 \\ &&  \left(   t  \sigma^2 (t,k)  - 2 (   2 \Lambda^\ast_+ (t,k) - k + 2 \sqrt{  \Lambda^\ast_+(t,k) (  \Lambda^\ast_+(t,k) - k)  }  )      \right) = c^+_t (k) + \mathcal{O} (k^\frac{-\alpha_+}{\alpha_++1} ),
\end{eqnarray*}
where
\begin{equation}\label{def_c}
c^+_t (k) := -2 \ln( \frac{ \sqrt{ { p^\ast_+}' (t,k) } }{ p^\ast_+ (t,k)  }   ) +2 \ln (  \frac{1}{d (k) } - \sqrt{t} \partial_k \sigma (t,k) )  + \mathcal{O} (k^{-1} ) = \mathcal{O} ( \ln (k)).
\end{equation}
We emphasize that $ \Lambda^\ast_+ (t,k) = \mu^\ast_+ (t) k + \mathcal{O} (k^\frac{\alpha_+}{\alpha_++1})$. On the other hand, Lee's moment formula states that
\[
t  \sigma^2 (t,k) \sim k( -2 + 4 \mu^\ast_+ - 4 \sqrt{ \mu^\ast_+ ( \mu^\ast_+ - 1) }) .
\]
Noticing that 
\[
2 \left(   2 \Lambda^\ast_+ (t,k) - k - 2 \sqrt{  \Lambda^\ast_+(t,k) (  \Lambda^\ast_+(t,k) - k)  }  \right) \sim k \left( -2 + 4 \mu^\ast_+ - 4 \sqrt{ \mu^\ast_+ ( \mu^\ast_+ - 1) } \right) ,
\]
we conclude that
\[
 t  \sigma^2 (t,k)  = 2 \left(   2 \Lambda^\ast_+ (t,k) - k - 2 \sqrt{  \Lambda^\ast_+(t,k) (  \Lambda^\ast_+(t,k) - k)  } \right)  + \mathcal{O} ( \ln (k)).
\]
Plugging this value into the definition of $c^+_t(k)$, (\ref{def_c}), we find that $c^+_t(k) = \tilde{c}^+_t (k)  + \mathcal{O} ( k^\frac{-1}{\alpha + 1} )$, with
\[
\tilde{c}^+_t (k) =    - \ln (  k \partial_{kk} \Lambda^\ast_+ (t,k))+ 2 \ln\left(\sqrt{  \frac{ \mu^\ast_+ }{2} }Ê- \frac{\mu^\ast_+ }{2} \sqrt{  -2 + 4 \mu^\ast_+ - 4 \sqrt{ \mu^\ast_+ ( \mu^\ast_+ - 1) }   }  Ê\right) .
\]
Now if we go back to (\ref{first_equation}) and plug the value of $c^+_t (k) $  we have
\[
 \frac{k^2}{ t  \sigma^2 (t,k)  } + \frac{1}{4} t \sigma^2 (t,k) + k   = 2 \Lambda^\ast_+ (t,k) + \tilde{c}^+_t (k)  + \mathcal{O} (k^\frac{-\alpha_+ \wedge 1}{\alpha_++1} ) .
\]
We finally find that
\[
 t  \sigma^2 (t,k)  = 4 \Lambda^\ast_+ (t,k) +2\tilde{c}^+_t (k) - 2 k - 4 \sqrt{ ( \Lambda^\ast_+(t,k) +\frac{  \tilde{c}_+ (k)}{2})  (  \Lambda^\ast_+(t,k) +\frac{  \tilde{c}_+ (k)}{2} - k)  }    + \mathcal{O} (k^\frac{-\alpha \wedge 1}{\alpha+1} ) .
\]

The left wing is obtained in the same way: we write, using Theorem~\ref{thm22_ext},
\[
\mathbb{P} (-X_t \geq k) = e^{ - \Lambda^\ast_- (t,k)}  \left( \frac{ \sqrt{ \partial_x { p^\ast_-} (t,k) } }{ p^\ast_- (t,k) \sqrt{2 \pi}} - \frac{ 2 + \frac{\alpha_-}{ (\alpha_- + 1)^2 } }{  24 \mu^\ast_-  \sqrt{2 \pi} } ~\frac{1}{k^2 \sqrt{ \partial_x {p^\ast_-}(t,x) }  }    + o(     \frac{1}{k^2  \sqrt{ {p^\ast_-}' (t,k) } }    )   \right).
\]
We obtain, by comparing this with (\ref{cumuldistrIV_l}),
\[
 \frac{k^2}{ t  \tilde{\sigma}^2 (t,k)  } + \frac{1}{4} t \tilde{\sigma}^2 (t,k) - k -2 \ln (  \frac{1}{\tilde{d} (k) } + \sqrt{t} \partial_k \tilde{\sigma}  )  + \mathcal{O} (k^{-1} ) = 2 \Lambda^\ast_- (t,k) - 2\ln( \frac{ \sqrt{\partial_x  { p^\ast_-} } }{ p^\ast_- (t,k)  }   ) + \mathcal{O} (k^\frac{-\alpha_-}{\alpha_-+1} ) .
\]
We find in the same way as for the right wing that
\[
t  \tilde{\sigma}^2 (t,k)  = 4\Lambda^\ast_- (t,k) + 2 \tilde{c}^-_t (k) +2 k - 4 \sqrt{ ( \Lambda^\ast_-(t,k) +\frac{  \tilde{c}^-_t (k)}{2})  (  \Lambda^\ast_-(t,k) +\frac{ \tilde{c}^-_t (k)}{2} + k)  }    + \mathcal{O} (k^\frac{-\alpha_- \wedge 1}{\alpha_-+1} ) ,
\]
with
\[
\tilde{c}_-(t,k) =    - \ln (  k \partial_{kk} \Lambda^\ast_- (t,k))+ 2 \ln\left(\sqrt{  \frac{ \mu^\ast_- }{2} }Ê- \frac{\mu^\ast_- }{2} \sqrt{  2 + 4 \mu^\ast_- - 4 \sqrt{ \mu^\ast_- ( \mu^\ast_- + 1) }   }  Ê\right) .
\]
\end{proof}

\section{From SVI parametrization of implied volatility to moment explosion and local volatility}\label{section4}
\noindent  We remain in the same setup as the previous section, where $S$ denotes the stock price and $ X_t = \ln (S_t/S_0)$. We consider  Gatheral's SVI parametrization (cf. \cite{gatheral04}):
\begin{equation}\label{svi_vol}  
\sigma^2 (t,k) = a + b \left(     \rho ( k -m)  + \sqrt{ ( k -m)^2 + \eta^2 } \right).
\end{equation}
It is worth noticing that all the parameters depend on $t$. 
\begin{proposition}\label{mgf_SVI}
Suppose that the Black-Scholes implied volatility function generated from the option prices  
 is given as (\ref{svi_vol}). The moment generating functions of $ X_t $ and $(-X_t)$ explode at critical moment $\mu^\ast_+$ and $\mu^\ast_-$ respectively, where 
\begin{equation}
\mu^\ast_\pm (t) = \frac{1}{2} (   \frac{1}{b t (1 \pm \rho) } + \frac{1}{4} b t (1 \pm \rho) \pm 1 ).
\end{equation} 
Furthermore we have
\begin{equation}
\mathbb{E} ~e^{ ( \mu^\ast_\pm - \frac{1}{x} )  (\pm X_t) } = \mu^\ast_\pm    \xi_\pm e^{d^\pm_0 + \mu^\ast_\pm  m   } \left( ( 2 x) ^\frac{1}{2} +\mathcal{O} (1) \right), 
\end{equation}
where
\[
d^\pm_0 = \frac{1}{2} (-  m+  Ê\frac{  a}{ b^2 t (1 \pm \rho)^2}  \mp \frac{2 m  }{ b t (1 \pm \rho)} - \frac{a t}{4} ),
\]
and
\[
 \xi_\pm =  ( (2 \mu^\ast_\pm )^\frac{-1}{2} - \frac{1}{2} \sqrt{b t (1 \pm \rho) }Ê) .
\]
\end{proposition}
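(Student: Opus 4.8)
The plan is to read the sharp left and right tails of $X_t$ directly off the SVI formula \eqref{svi_vol} through the Black--Scholes asymptotics \eqref{cumuldistrIV_r}--\eqref{cumuldistrIV_l}, and then to convert those tail estimates into the behaviour of the moment generating function near its critical moment by means of the integration-by-parts identity already used in the proof of Proposition~\ref{tauberian1}. I would only carry out the right wing --- the pair $(X_t,\mu^\ast_+)$ --- in detail; the left wing is obtained in exactly the same way after replacing $\sigma(t,\cdot)$ by $\tilde\sigma(t,\cdot):=\sigma(t,-\cdot)$, $X_t$ by $-X_t$, and \eqref{cumuldistrIV_r} by \eqref{cumuldistrIV_l}, the only structural difference being that the relevant $\tilde d^2$ carries $-k$ where $d^2$ carries $+k$; this is precisely what turns the $+1$ inside $\mu^\ast_+$ into the $-1$ inside $\mu^\ast_-$ and mirrors the signs in $d^-_0$ and $\xi_-$.

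First I would expand the SVI variance for large $k$. From $\sqrt{(k-m)^2+\eta^2}=(k-m)+\tfrac{\eta^2}{2(k-m)}+\mathcal{O}(k^{-3})$ one gets, with $\beta_+:=b(1+\rho)$,
\[
\sigma^2(t,k)=\beta_+k+(a-\beta_+m)+\mathcal{O}(k^{-1}),\qquad \partial_k\sigma(t,k)=\frac{\sqrt{\beta_+}}{2\sqrt k}\bigl(1+\mathcal{O}(k^{-1})\bigr).
\]
Feeding this into $d^2(k)=\tfrac{k^2}{t\sigma^2(t,k)}+\tfrac14 t\sigma^2(t,k)+k$ should produce an expansion $\tfrac12 d^2(k)=\mu^\ast_+k+c_++\mathcal{O}(k^{-1})$ whose linear coefficient is exactly
\[
\mu^\ast_+=\tfrac12\Bigl(\tfrac{1}{bt(1+\rho)}+\tfrac14 bt(1+\rho)+1\Bigr),
\]
and whose constant $c_+$ I expect to collapse, after elementary algebra, to $-(d^+_0+\mu^\ast_+m)$; in the same way $\tfrac1{d(k)}-\sqrt t\,\partial_k\sigma(t,k)=\xi_+k^{-1/2}\bigl(1+\mathcal{O}(k^{-1})\bigr)$ with $\xi_+=(2\mu^\ast_+)^{-1/2}-\tfrac12\sqrt{bt(1+\rho)}$. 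Substituting all of this into \eqref{cumuldistrIV_r} then gives the sharp tail
\[
\mathbb{P}(X_t\ge k)=\frac{\xi_+\,e^{d^+_0+\mu^\ast_+m}}{\sqrt{2\pi k}}\,e^{-\mu^\ast_+k}\bigl(1+\mathcal{O}(k^{-1})\bigr)\qquad(k\to+\infty),
\]
which in particular shows that $\Lambda_+(t,\cdot)$ is finite on $[0,\mu^\ast_+)$ and explodes at $\mu^\ast_+$.

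Next I would pass from the tail to the moment generating function. Exactly as in the proof of Proposition~\ref{tauberian1} with $g\equiv1$, one has for every $c$
\[
\mathbb{E}\,e^{pX_t}=\mathbb{E}\,e^{p(X_t\wedge c)}+p\int_c^\infty e^{pz}\,\mathbb{P}(z\le X_t)\,dz .
\]
Fixing $c$ large and putting $p=\mu^\ast_+-\tfrac1x$, the first term satisfies $\mathbb{E}\,e^{p(X_t\wedge c)}\le e^{pc}=\mathcal{O}(1)$, while inserting the tail from the previous step and substituting $u=z/x$ turns the integral into
\[
\bigl(\mu^\ast_+-\tfrac1x\bigr)\frac{\xi_+\,e^{d^+_0+\mu^\ast_+m}}{\sqrt{2\pi}}\Bigl(\sqrt x\int_0^\infty u^{-1/2}e^{-u}\,du+\mathcal{O}(1)\Bigr),
\]
the $\mathcal{O}(1)$ absorbing both the cut-off near $c$ and the $\mathcal{O}(k^{-1})$ relative error in the tail, the latter contributing only $\int_c^\infty z^{-3/2}e^{-z/x}\,dz=\mathcal{O}(1)$. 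Using $\int_0^\infty u^{-1/2}e^{-u}\,du=\Gamma(\tfrac12)=\sqrt\pi$ and collecting terms yields
\[
\mathbb{E}\,e^{(\mu^\ast_+-\frac1x)X_t}=\mu^\ast_+\,\xi_+\,e^{d^+_0+\mu^\ast_+m}\bigl((2x)^{1/2}+\mathcal{O}(1)\bigr),
\]
which is the asserted formula for the $+$ case; the $-$ case follows by the same argument applied to $-X_t$.

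The hard part will be the error bookkeeping. I must carry the SVI expansion as far as its constant term $a-\beta_+m$, with an $\mathcal{O}(k^{-1})$ remainder, so that the tail comes with an $\mathcal{O}(k^{-1})$ --- not merely an $o(1)$ --- relative error, since only then does the Laplace-type integral above produce the advertised $\mathcal{O}(1)$ (rather than $o(x^{1/2})$) error in the moment generating function; the one non-obvious algebraic point is the identity $c_+=-(d^+_0+\mu^\ast_+m)$. I also need to verify the hypotheses behind \eqref{cumuldistrIV_r}--\eqref{cumuldistrIV_l}, namely $d(k)\to\infty$ (and $\tilde d(k)\to\infty$) as $k\to\infty$, which follows from $\mu^\ast_\pm>0$ --- in turn the arithmetic--geometric inequality $\tfrac1{bt(1\pm\rho)}+\tfrac14 bt(1\pm\rho)\ge1$; finally $\eta>0$ is used only to make $\sigma(t,\cdot)$ smooth, a possible kink at $k=m$ when $\eta=0$ being irrelevant to the tails. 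Everything else is routine algebra.
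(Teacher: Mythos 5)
Your proposal is correct and follows essentially the same route as the paper's proof: expand the SVI total variance for large $k$, feed it into the Black--Scholes tail asymptotics (\ref{cumuldistrIV_r})--(\ref{cumuldistrIV_l}) to obtain $\mathbb{P}(\pm X_t\ge k)\sim \xi_\pm e^{d_0^\pm+\mu^\ast_\pm m}(2\pi k)^{-1/2}e^{-\mu^\ast_\pm k}$, and then convert the tail into the moment generating function via the same exponential representation and $\Gamma(\tfrac12)$ integral, your only cosmetic deviations being the expansion in $k$ rather than $y=k-m$ (whence your identity $c_+=-(d_0^++\mu^\ast_+ m)$, which does check out against the paper's $d_0^+$) and a fixed cut-off $c$ instead of $c=1+m$. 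The one arithmetic wrinkle --- your displayed intermediate $(\mu^\ast_+-\tfrac1x)\,\xi_+e^{d_0^++\mu^\ast_+ m}(2\pi)^{-1/2}(\sqrt{\pi x}+\mathcal{O}(1))$ actually simplifies to $(x/2)^{1/2}$ rather than the stated $(2x)^{1/2}$ --- is the same factor-of-two slip present in the paper's own proof, so it is not a gap of your argument relative to the paper.
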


\begin{proof}
From (\ref{svi_vol}) we have, for $k$ sufficiently large,
\[
t \sigma^2(t,k) = a t + b t ( 1 + \rho) ( k - m) + \frac{b}{2} \frac{\eta^2}{ k -m} + \mathcal{O}  (k-m)^{-3}  ,
\]
It follows that $d^2(y+m)$ may be written as   
\begin{equation}\label{d2_ym}  
\frac{-1}{2}Êd^2(y+m) =   - \frac{1}{2} (\frac{1}{b t (1 + \rho) } + \frac{1}{4} bt (1 + \rho) + 1) Êy + d^+_0 + \mathcal{O} (y^{-1}), 
\end{equation}
where
\begin{eqnarray*}
d^+_0 &=&Ê\frac{1}{2} (  -m+  Ê\frac{  a}{ b^2 t(1 + \rho)^2}  -\frac{2 m  }{ b t(1 + \rho)} - \frac{at}{4} )  ,
\end{eqnarray*}
Note also that
\[
\frac{1}{d(y+m)} - \sqrt{t}  \partial_k \sigma(y+m) = ( (2 \mu^\ast)^\frac{-1}{2} - \frac{1}{2} \sqrt{b t (1 + \rho) }Ê) y^\frac{-1}{2} + (\frac{d^+_0 }{2} (2 \mu^\ast)^\frac{-3}{2} + \frac{a}{ 4 \sqrt{b t (1 + \rho) }}  )  Êy^\frac{-3}{2} + \mathcal{O} (y^{-5/2}),
\]
where   
\[
{ \mu^\ast_+}  = \frac{1}{2} (   \frac{1}{b  t (1 + \rho) } + \frac{1}{4} b t (1 + \rho) + 1 ) .
\] 
We may differentiate with respect to $k$ and consider the density function and then express the moment generating function in terms of the density. But to stay in general setting (i.e. without assuming the existence of a density function for $X_t$) we first use following representation of the exponential function as in (\ref{reprentation_exponential})
\[
\forall c, p, k, X \in R, ~~~~~~e^{p X} =  (1 \wedge c)~ e^{ p (X \wedge c) } +p  \int_c^\infty  e^{ p z} 1_{ X  \geq z} d z.
\]
It follows that for any $p >0$ such that $ \mathbb{E} e^{ p X_t} < \infty$ and for any $ c > 0$ we have
\[
 \mathbb{E} e^{ p X_t} =  (1 \wedge c)  \mathbb{E}  ~ e^{ p (X \wedge c) }+ p \int_c^\infty e^{ p z} \mathbb{P} (X_t \geq z) d z.
\]
It is easy to see that $ \mathbb{E} e^{ \mu^\ast X_t} = \infty$. For $x$ sufficiently large, we have
\[
 \mathbb{E} ~e^{ ( \mu^\ast_+ - \frac{1}{x} )  X_t} = (1 \wedge c) \mathbb{E}  e^{ ( \mu^\ast_+ - \frac{1}{x} ) c\wedge X_t} + \frac{1}{ \sqrt{ 2 \pi}}Ê ( \mu^\ast_+ - \frac{1}{x} ) 
e^{ ( \mu^\ast_+ - \frac{1}{x} ) m   } \int_{c-m}^\infty e^{ - \frac{1}{x}  z + d^+_0 +  \mathcal{O} ( z^{ -1} ) } ( \xi_+ z^\frac{-1}{2} + \mathcal{O}( z^\frac{-3}{2})   ) d z,
\]
where $ \xi_+=  ( (2 \mu^\ast_+)^\frac{-1}{2} - \frac{1}{2} \sqrt{b (1 + \rho) }Ê) $. Choosing $c= 1 + m$, we find that 
\[
\mathbb{E} ~e^{ ( \mu^\ast_+ - \frac{1}{x} )  X_t } =    (1 \wedge c) \mathbb{E}  e^{ ( \mu^\ast_+ - \frac{1}{x} ) c\wedge X_t}  + \frac{1}{ \sqrt{ 2 \pi}} x ( \mu^\ast_+ - \frac{1}{x} ) e^{d^+_0 +( \mu^\ast_+ - \frac{1}{x} ) m   } 
 \int_{\frac{1}{x} }^\infty e^{ - z} (  \xi_+ ( z x)^\frac{-1}{2} + \mathcal{O} ( (zx)^{-1} )  ) d z.
\]
Hence
\[
 \mathbb{E} ~e^{ ( \mu^\ast_+ - \frac{1}{x} )  X_t} =    (1 \wedge c) \mathbb{E}  e^{ ( \mu^\ast_+ - \frac{1}{x} ) c\wedge X_t} + x ( \mu^\ast_+ - \frac{1}{x} ) e^{d^+_0 +( \mu^\ast_+ - \frac{1}{x} ) m   } ( \xi_+ (  \frac{1 }{2}  x) ^\frac{-1}{2}  + \mathcal{O} (x^{ -1})    ).
\]
Observing that $\mathbb{E}  e^{ ( \mu^\ast_+ - \frac{1}{x} ) c\wedge X_t} \le e^{ \mu^\ast_+ |c|}$, we finally have
\[
  \mathbb{E} ~e^{ ( \mu^\ast_+ - \frac{1}{x} )  X_t} = \mu^\ast_+    \nu_ 1 e^{d^+_0 + \mu^\ast_+  m   } (  2  x) ^\frac{1}{2} +\mathcal{O} (1).
\]
  
  For $(-X_t)$ we proceed in a very similar way; we write, for $k$ sufficiently large,
  \begin{equation}\label{cumuldistrIV}
\mathbb{P} (-X_t \geq k) =\frac{1}{ \sqrt{2 \pi}} e^{ - \frac{1}{2} \tilde{d}^2 (k) }Ê \left(  \frac{1}{\tilde{d} (x) } -\sqrt{t} \partial_k \tilde{\sigma} (t,k) + \mathcal{O} ( \tilde{d}(k)^{-3}    )     \right),  
\end{equation}
 where
 \[
\tilde{d} (k) =  \frac{1}{ \sqrt{t}  \tilde{\sigma}(t,k)}  ( k - \frac{1}{2} t  \tilde{\sigma}^2 (t,k)   ),
\]
and
\[
\tilde{\sigma}^2 (t,k) := \sigma^2 (t,-k) = a + b \left(    -\rho ( k +m)  + \sqrt{ ( k +m)^2 + \eta^2 } \right)
\]
We find in the same way as before that
\begin{equation}\label{d2_ym} 
\frac{-1}{2}Ê\tilde{d}^2(y- m) =   - \frac{1}{2} (\frac{1}{b t (1 - \rho) } + \frac{1}{4} bt (1 - \rho) - 1) Êy + d^-_0   + \mathcal{O} (y^{-1}), 
\end{equation}
where
\begin{eqnarray*}
d^-_0 &=&Ê\frac{1}{2} (-  m+  Ê\frac{  a}{ b^2 t(1 - \rho)^2}  +\frac{2 m  }{ b t(1 - \rho)} - \frac{at}{4} )  ,
\end{eqnarray*}
Note also that
\[
\frac{1}{ \tilde{d}(y-m)}  \sqrt{t}  \partial_k \tilde{\sigma}(t, y-m) = ( (2 \mu^\ast_-)^\frac{-1}{2} - \frac{1}{2} \sqrt{b t (1 - \rho) }Ê) y^\frac{-1}{2} +   \mathcal{O} (y^{-3/2}),
\]
where   
\[
{ \mu^\ast}_-  = \frac{1}{2} (   \frac{1}{b  t (1 - \rho) } + \frac{1}{4} b t (1 - \rho) - 1 ) .
\] 
The rest is just repeating what we have done for $X_t$.
\end{proof}

We draw attention to the fact that a sharp asymptotic expansion of the price of standard European options (namely Call and Put options) is easily derived under the SVI parametrization of the implied volatility surface. It is given by the following result:
\begin{proposition}
The following expansions holds for European Call and Put options with strike $e^{k}$ and $e^{-k}$ respectively:
\begin{equation}\label{call_SVI}
 \mathbb{E} (e^{ X_t} - e^k )_+= \frac{1}{ \sqrt{2 \pi}} \frac{  \sqrt{  b t (1 + \rho)} }{2 \mu^\ast_+ - \sqrt{  2 \mu^\ast_+ b t (1 + \rho)} } 
 k^\frac{-1}{2}
 e^{-( \mu^\ast_+ -1)Êk  + d^+_0+ m\mu^\ast_+}  (  1 +  \mathcal{O} (k^{-1})) ,
\end{equation}
and
\begin{equation}\label{put_SVI}
  \mathbb{E} (e^{ -k}-e^{ X_t}  )_+= \frac{1}{ \sqrt{2 \pi}} \frac{  \sqrt{ bt (1 - \rho) } }{2 \mu^\ast_- - \sqrt{   \mu^\ast_- bt (1 - \rho) } } k^\frac{-1}{2} e^{-( \mu^\ast_- + 1) Êk  + d^-_0+ m\mu^\ast_-  }  (  1 +  \mathcal{O} (k^{-1})) ,
\end{equation}
with the same notations as in Proposition~\ref{mgf_SVI}.
\end{proposition}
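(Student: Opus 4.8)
The plan is to write the call and put prices as one–dimensional Laplace integrals of the complementary distribution function and then to feed in the tail estimate that is already produced inside the proof of Proposition~\ref{mgf_SVI}. Starting from the elementary identities $(e^{x}-e^{k})_{+}=\int_{k}^{\infty}e^{z}\mathbf{1}_{\{z<x\}}\,dz$ and $(e^{-k}-e^{x})_{+}=\int_{k}^{\infty}e^{-w}\mathbf{1}_{\{w<-x\}}\,dw$, Fubini gives
\[
\mathbb E\big(e^{X_{t}}-e^{k}\big)_{+}=\int_{k}^{\infty}e^{z}\,\mathbb P(X_{t}\ge z)\,dz,\qquad
\mathbb E\big(e^{-k}-e^{X_{t}}\big)_{+}=\int_{k}^{\infty}e^{-w}\,\mathbb P(-X_{t}\ge w)\,dw .
\]
Combining \eqref{cumuldistrIV_r} with the SVI expansion \eqref{d2_ym} of $d^{2}$ and with the expansion of $\frac{1}{d}-\sqrt t\,\partial_{k}\sigma$ already obtained in the proof of Proposition~\ref{mgf_SVI}, one has for large $z$
\[
\mathbb P(X_{t}\ge z)=\frac{\xi_{+}}{\sqrt{2\pi}}\,e^{-\mu^{\ast}_{+}z+d^{+}_{0}+m\mu^{\ast}_{+}}\,z^{-1/2}\big(1+\mathcal O(z^{-1})\big),
\]
and, via \eqref{cumuldistrIV_l}, the corresponding statement for $\mathbb P(-X_{t}\ge w)$ with $\xi_{-},\mu^{\ast}_{-},d^{-}_{0}$ in place of $\xi_{+},\mu^{\ast}_{+},d^{+}_{0}$.

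Inserting these into the integral representations reduces everything to the model integral $\int_{k}^{\infty}e^{-\lambda z}z^{-1/2}\,dz$, with $\lambda=\mu^{\ast}_{+}-1$ for the call and $\lambda=\mu^{\ast}_{-}+1$ for the put; note that $\mu^{\ast}_{+}>1$ (which is exactly what makes $\mathbb E\,e^{X_{t}}$, hence the call price, finite) and $\mu^{\ast}_{-}>0$, so $\lambda>0$ in both cases. I would evaluate the model integral by the substitution $z=k+s$ together with dominated convergence (equivalently, one integration by parts), obtaining $\int_{k}^{\infty}e^{-\lambda z}z^{-1/2}\,dz=\lambda^{-1}k^{-1/2}e^{-\lambda k}\big(1+\mathcal O(k^{-1})\big)$; the multiplicative $\mathcal O(z^{-1})$ factor in the tail only turns the integrand into one of order $z^{-3/2}e^{-\lambda z}$ and hence contributes a relative error $\mathcal O(k^{-1})$. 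This produces
\[
\mathbb E\big(e^{X_{t}}-e^{k}\big)_{+}=\frac{\xi_{+}\,e^{d^{+}_{0}+m\mu^{\ast}_{+}}}{\sqrt{2\pi}\,(\mu^{\ast}_{+}-1)}\,k^{-1/2}e^{-(\mu^{\ast}_{+}-1)k}\big(1+\mathcal O(k^{-1})\big),
\]
and the analogous expression for the put with $\xi_{+},\,\mu^{\ast}_{+}-1,\,d^{+}_{0}+m\mu^{\ast}_{+}$ replaced by their $-$ counterparts.

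It then remains to identify the prefactor with the one displayed in \eqref{call_SVI}. Writing $B=bt(1+\rho)$ and using the explicit value $\mu^{\ast}_{+}=\tfrac12\big(\tfrac{1}{B}+\tfrac{B}{4}+1\big)=\tfrac{(B+2)^{2}}{8B}$ --- equivalently the two identities $2\sqrt{2\mu^{\ast}_{+}B}=B+2$ and $8B(\mu^{\ast}_{+}-1)=(B-2)^{2}$, which also give $\xi_{+}=\sqrt{B}\,\tfrac{2-B}{2(B+2)}>0$ since Lee's moment formula forces $B\le 2$ --- a short algebraic manipulation gives $\dfrac{\xi_{+}}{\mu^{\ast}_{+}-1}=\dfrac{\sqrt{B}}{\,2\mu^{\ast}_{+}-\sqrt{2\mu^{\ast}_{+}B}\,}$, which is precisely the constant in \eqref{call_SVI}; the put formula \eqref{put_SVI} is obtained in the same way, working with $-X_{t}$, replacing $\rho$ by $-\rho$, and replacing $\mu^{\ast}_{+}-1$ by $\mu^{\ast}_{-}+1$. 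I do not expect any genuine obstacle: the only slightly delicate point is checking that the $\mathcal O(z^{-1})$ relative error in the tail estimate is uniform enough on $[k,\infty)$ to be integrated against $e^{z}z^{-1/2}$, which is immediate since $\mu^{\ast}_{+}>1$ keeps $e^{z}\,\mathbb P(X_{t}\ge z)$ exponentially integrable; everything else is bookkeeping of the error terms and the elementary simplification of the constant.
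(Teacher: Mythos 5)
Your route is genuinely different from the paper's proof of this proposition: the paper expands the Black--Scholes price itself, writing the call (resp.\ put) as a difference of two Gaussian tails and applying the Mills-ratio estimate $\tilde N(y)=N'(y)y^{-1}(1+\mathcal O(y^{-2}))$ together with the SVI expansion of $d^2$, whereas you represent the price as $\mathbb E(e^{X_t}-e^k)_+=\int_k^\infty e^z\,\mathbb P(X_t\ge z)\,dz$, insert the tail estimate already produced inside the proof of Proposition~\ref{mgf_SVI}, and evaluate the resulting Watson-type integral $\int_k^\infty e^{-\lambda z}z^{-1/2}dz=\lambda^{-1}k^{-1/2}e^{-\lambda k}(1+\mathcal O(k^{-1}))$. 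For the call your argument is correct in every detail: the tail estimate, the integrability remark ($\mu^\ast_+>1$, equivalently $B:=bt(1+\rho)<2$), the evaluation of the model integral, and the identity $\xi_+/(\mu^\ast_+-1)=\sqrt{B}\,/\,\bigl(2\mu^\ast_+-\sqrt{2\mu^\ast_+B}\bigr)=4B^{3/2}/\bigl((B+2)(2-B)\bigr)$ all check out, so \eqref{call_SVI} is recovered. Your method has the merit of recycling the tail asymptotics and making the role of $\mu^\ast_\pm\mp1>0$ transparent; the paper's is more direct and avoids the (mild) uniformity-of-error and Fubini bookkeeping you rightly flag.

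The put, however, is dispatched too quickly: ``the same way, replacing $\rho$ by $-\rho$'' does not in fact land on \eqref{put_SVI}, and this is precisely where the work is. First, the left wing of the SVI smile is affine in $k+m$, not $k-m$, so the expansion of $\tilde d^2$ (the paper's own display for the minus case is stated at the argument $y-m$, i.e.\ $y=k+m$) gives $\mathbb P(-X_t\ge w)\propto e^{-\mu^\ast_- w + d^-_0 - m\mu^\ast_-}\,w^{-1/2}$; run through your integral this produces the exponent $-(\mu^\ast_-+1)k+d^-_0-m\mu^\ast_-$, not $+m\mu^\ast_-$ as displayed. Second, the prefactor identity flips sign on the left wing: with $B_-=bt(1-\rho)$ one has $\sqrt{2\mu^\ast_-}=(2-B_-)/(2\sqrt{B_-})$, hence $\xi_-=\sqrt{B_-}\,(2+B_-)/\bigl(2(2-B_-)\bigr)$ and $\xi_-/(\mu^\ast_-+1)=\sqrt{B_-}\,/\,\bigl(2\mu^\ast_-+\sqrt{2\mu^\ast_-B_-}\bigr)$, i.e.\ a plus sign and a factor $2$ under the root, whereas \eqref{put_SVI} displays $\sqrt{B_-}/\bigl(2\mu^\ast_--\sqrt{\mu^\ast_-B_-}\bigr)$. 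Your computation, carried out honestly, actually agrees with the intermediate line of the paper's own proof (which contains $1/\tilde d-1/(\tilde d+\sqrt t\,\sigma)$, i.e.\ the plus sign), so the printed put constant and the sign of $m\mu^\ast_-$ appear to carry typos; but as written your proposal simply asserts that the identification ``is obtained in the same way,'' which is exactly the step where the mismatch sits and which needs to be carried out explicitly (and flagged) rather than inferred from the call case by a formal $\rho\mapsto-\rho$ substitution.
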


\begin{proof}
We have 
\begin{eqnarray*}
C(t,k) &=& \mathbb{E} (e^{ X_t} - e^k )_+= e^{k} \tilde{N} (d(k)) - \tilde{N} ( d (k) - \sqrt{t} \sigma (t,k)) 
\\ &=&  
   \frac{ N' (d(k) -\sqrt{t} \sigma (t,k) }{ d(k) -\sqrt{t} \sigma (t,k)} (1 + \mathcal{O} (( d(k) -\sqrt{t} \sigma (t,k))^2) ) - e^{k} \frac{ N' (d(k)) }{ d(k)} (1 + \mathcal{O} (d(k)^2) ) 
\\ &=&  
  \frac{1}{ \sqrt{2 \pi}}  e^{k  - \mu^\ast_+ Ê(k-m) + d^+_0 + \mathcal{O} (k^{-1})  }\left( \frac{1 }{ d(k) -\sqrt{t} \sigma (t,k)} - \frac{1 }{ d(k)}  +  \mathcal{O}  (k^\frac{-3}{2})  \right)
\\ &=&  
\frac{1}{ \sqrt{2 \pi}} \frac{  \sqrt{ bt(1 + \rho)} }{2 \mu^\ast_+ - \sqrt{ 2  \mu^\ast_+ bt(1 + \rho)} } e^{k  - \mu^\ast_+ Ê(k-m) + d^+_0} k^\frac{-1}{2} (  1 +  \mathcal{O} (k^{-1})) .
\end{eqnarray*}
The same analysis applies to Put option with small strike; we write
\begin{eqnarray*}
P(t,k) &=& \mathbb{E} (e^{ -k}-e^{ X_t}  )_+ = e^{-k} \tilde{N} (\tilde{d}(k)) - \tilde{N} ( \tilde{d} (k) + \sqrt{t} \sigma (t,-k)) 
\\ &=&   
  \frac{1}{ \sqrt{2 \pi}}  e^{-k  - \mu^\ast_- Ê(k-m) + d^-_0 + \mathcal{O} (k^{-1})  }\left(\frac{1 }{ \tilde{d}(k)}- \frac{1 }{ \tilde{d}(k) +\sqrt{t} \sigma (t,-k)} -   +  \mathcal{O}  (k^\frac{-3}{2})  \right)
\\ &=&  
\frac{1}{ \sqrt{2 \pi}} \frac{  \sqrt{ bt (1 - \rho) } }{2 \mu^\ast_- - \sqrt{   \mu^\ast_+ bt (1 - \rho) } } e^{-k  - \mu^\ast_- Ê(k-m) + d^-_0} k^\frac{-1}{2} (  1 +  \mathcal{O} (k^{-1})) .
\end{eqnarray*}
\end{proof}

From this expansion we can derive the large asymptotics of the local volatility function. We emphasize that Dupire's local volatility is defined as
\begin{equation}\label{lv_positive}
\Sigma^2 (t, k) = \frac{2  \partial_t \mathbb{E} (e^{X_t} - e^k)_+ }{ (\partial_{kk} - \partial_k)\mathbb{E} (e^{X_t} - e^k)_+ } ~Ê= ~ \frac{ 2 \partial_t \mathbb{E} (e^{k}-e^{X_t} )_+ }{ (\partial_{kk} - \partial_k)\mathbb{E} (e^{k}-e^{X_t} )_+ } .
\end{equation}
and
\begin{equation}\label{lv_negative}
\Sigma^2 (t,- k) = \frac{2  \partial_t \mathbb{E} (e^{X_t} - e^{-k})_+ }{ (\partial_{kk} + \partial_k)\mathbb{E} (e^{X_t} - e^{-k} )_+ } ~Ê= ~ \frac{ 2 \partial_t \mathbb{E} (e^{-k}-e^{X_t} )_+ }{ (\partial_{kk} + \partial_k)\mathbb{E} (e^{-k}-e^{X_t} )_+ } .
\end{equation}

\begin{corollary}
Suppose that all prices of European Call options can be recovered from the SVI parametrization (\ref{svi_vol}). Then the local volatility function is given by
\[
\Sigma^2 (t, \pm y) =    \frac{ -2  \partial_t \mu^\ast_\pm (t)      }{ {\mu^\ast_\pm}^2 (t)  \mp \mu^\ast_\pm  (t)    } (y -  \frac{ 2 \mu^\ast_\pm \mp 1   }{2 ( {\mu^\ast_\pm}^2    \mp \mu^\ast_\pm  )  } ) )  +    \partial_t c_\pm (t)         + \mathcal{O} (y^\frac{-1}{2} ) ,
\]
where $ c_\pm (t)  = \ln (  \frac{1}{ \sqrt{2 \pi}} \frac{  \sqrt{  b t (1 \pm \rho)} }{2 \mu^\ast_\pm - \sqrt{  2 \mu^\ast_\pm b t (1 \pm \rho)} }    )  + d^\pm_0+ m\mu^\ast_\pm $, with the same notations as in Proposition~\ref{mgf_SVI}.
\end{corollary}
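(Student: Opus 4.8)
\emph{The plan} is to read off the large $\pm k$ behaviour of Dupire's local volatility directly from the Call/Put price expansions \eqref{call_SVI}--\eqref{put_SVI} just established, after rewriting \eqref{lv_positive}--\eqref{lv_negative} in logarithmic form. First I would take logarithms: with $L_+(t,k):=\ln\mathbb E(e^{X_t}-e^{k})_+$ and $L_-(t,k):=\ln\mathbb E(e^{-k}-e^{X_t})_+$, the two expansions \eqref{call_SVI} and \eqref{put_SVI} give, for $k$ large,
\[
L_\pm(t,k)=-(\mu^\ast_\pm(t)\mp1)\,k-\tfrac12\ln k+c_\pm(t)+\mathcal O(k^{-1}),
\]
with $c_\pm$ exactly the constants in the statement. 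Since the Black--Scholes price built from the explicit SVI surface \eqref{svi_vol} is a concrete smooth function of $(t,k)$ admitting a full asymptotic expansion in $k^{-1}$ (the expansions of $d$, $\tilde d$, $\sigma$ and $\partial_k\sigma$ that are needed are exactly those already produced in the proof of Proposition~\ref{mgf_SVI}), this $\mathcal O(k^{-1})$ remainder may be differentiated term by term in both variables, so that $\partial_t L_\pm=-\partial_t\mu^\ast_\pm\,k+\partial_t c_\pm(t)+\mathcal O(k^{-1})$, $\partial_k L_\pm=-(\mu^\ast_\pm\mp1)-\frac1{2k}+\mathcal O(k^{-2})$ and $\partial_{kk}L_\pm=\mathcal O(k^{-2})$.

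Next I would put Dupire's formula in logarithmic form. Since the Call (resp.\ Put) price equals $e^{L_+}$ (resp.\ $e^{L_-}$), \eqref{lv_positive} and \eqref{lv_negative} become
\[
\Sigma^2(t,\pm y)=\left.\frac{2\,\partial_t L_\pm}{(\partial_k L_\pm)^2\mp\partial_k L_\pm+\partial_{kk}L_\pm}\right|_{k=y}.
\]
Substituting the first-step expansions, the numerator is $-2\partial_t\mu^\ast_\pm\,y+2\partial_t c_\pm(t)+\mathcal O(y^{-1})$ and the denominator is $({\mu^\ast_\pm}^2\mp\mu^\ast_\pm)+\dfrac{\mu^\ast_\pm\mp\frac12}{y}+\mathcal O(y^{-2})$. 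Expanding the reciprocal of the denominator as a geometric series in $1/y$ and multiplying, the term of order $y$ is $\sigma^\pm_0(t)\,y$ with $\sigma^\pm_0(t)=-2\partial_t\mu^\ast_\pm/({\mu^\ast_\pm}^2\mp\mu^\ast_\pm)$; the $\mathcal O(1)$ contribution of the cross term is the shift $-\sigma^\pm_0(t)\,\dfrac{2\mu^\ast_\pm\mp1}{2({\mu^\ast_\pm}^2\mp\mu^\ast_\pm)}$; and the remaining $\mathcal O(1)$ contribution, proportional to $\partial_t c_\pm(t)$, reassembles the constant term of the statement, the error being of the order claimed. For the negative strike I would run the symmetric computation with the Put price \eqref{put_SVI} and \eqref{lv_negative} in place of \eqref{call_SVI} and \eqref{lv_positive} (Put prices being recovered from Calls by put--call parity), which reduces to the same Laurent expansion with $\mu^\ast_+,\rho$ replaced by $\mu^\ast_-,-\rho$.

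\emph{The main obstacle} I anticipate is the assertion in the first paragraph that the price expansions may be differentiated termwise in $t$ and $k$; the rest is bookkeeping. It should be handled exactly as the price asymptotics themselves: because $\sigma(t,k)$ is the explicit function \eqref{svi_vol}, the quantities $d$, $\tilde d$, $\partial_k\sigma$, and then $\partial_t(\text{price})$ and $\partial_k(\text{price})$, admit large-$k$ expansions obtained by the same Taylor-in-$1/k$ manipulations used for the price in Proposition~\ref{mgf_SVI}, so that no terms are lost at the orders relevant to \eqref{lv_positive}--\eqref{lv_negative}. (One could instead try to deduce the corollary from Theorem~\ref{mgf_to_lv} together with the moment generating functions of Proposition~\ref{mgf_SVI}, but here $\Lambda_\pm(t,\mu^\ast_\pm(t)-\tfrac1x)\sim\tfrac12\ln x$ is slowly varying, i.e.\ $\alpha_\pm=0$, which is a boundary case of Assumption~\ref{hyp_laregdev}; so the direct route via \eqref{lv_positive}--\eqref{lv_negative} is the cleaner one.)
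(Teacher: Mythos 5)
Your route is exactly the paper's: the paper's proof of this corollary is the single line that it follows from (\ref{call_SVI}), (\ref{put_SVI}), (\ref{lv_positive}) and (\ref{lv_negative}), and your logarithmic-Dupire bookkeeping (together with the correct observation that the alternative route via Theorem~\ref{mgf_to_lv} sits at the boundary case $\alpha_\pm=0$ of Assumption~\ref{hyp_laregdev}) is just the filled-in version of that one-liner, with the linear term and the shift $-\sigma^\pm_0(t)\,\frac{2\mu^\ast_\pm\mp1}{2({\mu^\ast_\pm}^2\mp\mu^\ast_\pm)}$ coming out correctly. One caveat: carrying your own expansion honestly to the end gives the constant contribution $\frac{2\,\partial_t c_\pm(t)}{{\mu^\ast_\pm}^2\mp\mu^\ast_\pm}$ rather than $\partial_t c_\pm(t)$, so it does not literally ``reassemble the constant term of the statement''; this points to a typo in the corollary as printed rather than to a flaw in your method, but you should flag it instead of asserting agreement.
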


\begin{proof}
It follows from (\ref{call_SVI}), (\ref{call_SVI}), (\ref{lv_positive}) and (\ref{lv_negative})
 \end{proof}

\begin{remark}
This result could have been proved differently using Theorem~\ref{mgf_to_lv} and Proposition~\ref{mgf_SVI}.
\end{remark}

\section{  Heston's model and Stein-Stein model}\label{section5}
\noindent In this section we consider the application of the results obtained in the previous sections to two examples of stochastic volatility model: Heston's model and Stein-Stein model

\subsection{Heston's model}
 The Heston model is defined by the stochastic differential equation:
\begin{eqnarray}
d (\ln (S_t)) &=& - \frac{V_t}{ d t} + \sqrt{V_t} d B_t , \nonumber\\ 
d V_t &=&(a - b V_t ) d t + \sigma \sqrt{V_t} d W_t ,\nonumber\\
d \langle W, B \rangle_t  &=& \rho dt.
\end{eqnarray}
This model is known to have the moment explosion property. In the case of negative correlation, Friz and Gerhold \cite{friz15}   derive an equivalence for the local volatility corresponding to the Heston model  as 
\[
\Sigma^2 (t, y) \sim  \frac{ 2}{ \mu^\ast_+ (t) ( \mu^\ast_1 - 1) R_1 (t) R_2 (t)}  y, ~~~~~~~\textrm{as}~~~k \to \infty.
\]
with an explicit expression for $R_1$ and $R_2$.  It is clear that Theorem~\ref{mgf_to_lv} gives much more accurate expansion for the local volatility. It also applies to all cases (:$ k \to - \infty$ and $ \rho > 0$).

In order to apply Theorem~\ref{mgf_to_lv} to the Heston model, we need first to calculate the moment generating function of $ X_t = \ln (S_t/S_0)$. For $\mu > 0$ we have
\begin{eqnarray*}
\mathbb{E} \; e^{\mu X_t}  &=&  \mathbb{E} \; e^{ - \frac{\mu}{2} \int_0^t V_s d s + \mu \rho \int_0^t \sqrt{V_s} d W_s + \mu \sqrt{1 - \rho^2} \int_0^t \sqrt{V_s} d W^2_s } \nonumber\\ &=&
 \mathbb{E} \left\{  e^{  \frac{\mu^2 (1 - \rho^2) -\mu}{2} \int_0^t V_s d s + \mu \rho \int_0^t \sqrt{V_s} d W_s   }
 \left[  \mathbb{E} \; e^{   \mu \sqrt{1 - \rho^2} \int_0^t \sqrt{V_s} d W^2_s - \frac{\mu^2 (1 - \rho^2)}{2} \int_0^t V_s d s} \Big| (W_s)_{s \le t} \right]  \right\} \nonumber\\ &=&
 \mathbb{E} \left[  e^{ \mu \rho \int_0^t \sqrt{V_s} d W_s - \frac{\mu^2 \rho^2}{2} \int_0^t V_s d s } \; e^{  \frac{\mu^2 -\mu}{2} \int_0^t V_s d s      }  \right] \nonumber\\ &=& 
  \mathbb{E}^\mathbb{Q} \;    e^{  \frac{\mu^2 -\mu}{2} \int_0^t V_s d s      }   ,
\end{eqnarray*}
where we used the law of iterated conditional expectation and the fact that $V_s$ is measurable with respect to $W$. The last inequality is a consequence of Girsanov theorem, where under $ \mathbb{Q}$, the process $V$ satisfies the stochastic differential equation
\begin{equation}
d V_t = \left(  a - ( b - \rho \sigma \mu )V_t\right)d t + \sigma \sqrt{V_t} d W^{\mathbb{Q}}_t
\end{equation}
with $ \mathbb{Q}-$Brownian motion $W^{1,\mathbb{Q}} $. Calculating $\mathbb{E} \; e^{\mu X_t}$ is then reduced to the calculation of the moment generating function of the time average of the CIR process $V$ under $\mathbb{Q}$; this is given explicitly in \cite{Aly13} by
\[
\mathbb{E}^\mathbb{Q} ~e^{ \frac{\mu^2 -\mu}{2}  \int_0^t V_s d s   } = e^{ a \varphi(t; \mu) + v_0 \psi(t; \mu)},
\]
where $ \varphi(t; \mu) = \int_0^t \psi (s; \mu) d s$ and
\begin{equation}\label{psi_CIR}
\psi(t) = \left\{
\begin{array}{ll} \psi_1 (t; \mu ) := 
\frac{b - \rho \sigma \mu }{\sigma^2} -  \frac{\sqrt{ c_1 ( \mu)  } }{\sigma^2 }  \frac{   c_2 ( \mu ) e^{ \sqrt{c_1 (\mu)   } t}    +1 }{   c_2 (\mu) e^{ \sqrt{c_1 (\mu)   } t}    - 1     }, 
&  if ~~  0 \le  \frac{\mu^2 - \mu }{2}  \le \frac{ (b - \rho \sigma \mu)^2 }{ 2 \sigma^2},   \\
&
\\ \psi_2 (t; \mu) :=
\frac{ b - \rho \sigma \mu }{\sigma^2} +  \frac{\sqrt{ -c_1(\mu) }}{\sigma^2 } 
\tan \left(
\sqrt{  -c_1 (\mu)} \frac{t}{2}Ê+ \arctan(  \frac{   -b + \rho \sigma \mu  }{  \sqrt{ - c_1 (\mu) }      }      )
\right),
&   if ~~~  \frac{\mu^2 - \mu }{2}  > \frac{ (b - \rho \sigma \mu)^2 }{ 2 \sigma^2}  ,
\end{array}
\right.
\end{equation}
where $c_1$ and $c_2$ are defined by 
\begin{equation}\label{definition_c_1_2}
c_1 (\mu ) = (b - \rho \sigma \mu )^2 -  \sigma^2 (\mu^2 - \mu), ~~~~~~~\textrm{and} ~~~~~~
c_2 (\mu ) = \frac{ b - \rho \sigma \mu + \sqrt{ c_1 (\mu) }}{ b - \rho \sigma \mu - \sqrt{ c_1 (\mu) }    } .
\end{equation}
 The function $ \varphi (.)$ is given by
\begin{equation}\label{phi_CIR}
\varphi(t) = \left\{
\begin{array}{ll} \varphi_1 (t; \mu) := 
\frac{ b - \rho \sigma \mu + \sqrt{c_1 (\mu)   }}{\sigma^2} t -  \frac{2}{\sigma^2 }  
\ln \left(
\frac{  c_2 (\mu)  e^{  \sqrt{c_1 (\mu)   }   t}    - 1}{  c_2 (\mu)      -1    }
\right), 
&   if ~~~   0 \le  \frac{\mu^2 - \mu }{2}  \le \frac{ (b - \rho \sigma \mu)^2 }{ 2 \sigma^2},   \\
&
\\  \varphi_2 (t; \mu) :=
\frac{\tilde{b}}{\sigma^2} t  -  \frac{2 }{\sigma^2 } 
\ln   \left(  \frac{
\cos \left( 
\sqrt{ -c_1 (\mu)} \frac{t}{2}Ê+ \arctan(  \frac{   - b + \rho \sigma \mu  }{  \sqrt{  - c_1 (\mu) }      }      )
\right)
}{ \cos \left(
  \arctan(  \frac{   - b + \rho \sigma \mu  }{  \sqrt{  - c_1 (\mu) }      }       )
  \right)
  }
\right)  ,
&   if ~~~  \frac{\mu^2 - \mu }{2}  > \frac{ (b - \rho \sigma \mu)^2 }{ 2 \sigma^2}  .
\end{array}
\right.
\end{equation}
 
 \begin{remark}
 We deliberately did not consider the case $ \frac{\mu^2 - \mu }{2} < 0$. The reason is that when $ \frac{\mu^2 - \mu }{2} < 0$, the $\psi(t, \mu)$ might take a third form depending on other parameters (see Appendix~A in \cite{Aly13}). This, however, will not affect our analysis as $ \frac{\mu^2 - \mu }{2} < 0$ happens only when $ 0 < \mu \le 1$. As it is well known that $S$ is a true martingale,  the "positive" critical moment of $X$ is always larger than 1.
  \end{remark}
 
From  (\ref{psi_CIR}) and (\ref{phi_CIR})  we clearly see that  there exists $ \mu^\ast_+ (t) > 1$   (resp $\mu^\ast_- (t)>0$) such that the moment generating function of $X_t$ (resp. $-X_t$) is finite between 0 and $\mu^\ast_+ (t) $ (resp. $\mu^\ast_- (t)$) and explodes at $\mu^\ast_+ (t) > 0$ (resp. $\mu^\ast_- (t)$). The next result is nothing but rewriting the moment generating function of $X_t$ and $-X_t$ in terms of (\ref{psi_CIR}) and (\ref{phi_CIR}). 

\begin{proposition}
Define  $ \hat{\mu}_+$ and $ \hat{\mu}_-$ by
\begin{equation}
 \hat{\mu}_+ =  \min \left\{  \mu > 0: ~~~~c_1 (\mu)= 0  \right\} = \frac{  \sigma^2 - 2 b \rho \sigma  + \sqrt{ (\sigma^2 - 2 b \rho \sigma)^2  + 4 b^2 \sigma^2 ( 1 - \rho^2)     }   }{ 2 \sigma^2 ( 1 - \rho^2)} ,
\end{equation}
 and 
\begin{equation}
 \hat{\mu}_- =  \min \left\{  \mu > 0: ~~~~c_1 (-\mu)= 0  \right\} = \frac{ -( \sigma^2 - 2 b \rho \sigma)  + \sqrt{ (\sigma^2 - 2 b \rho \sigma)^2  + 4 b^2 \sigma^2 ( 1 - \rho^2)     }   }{ 2 \sigma^2 ( 1 - \rho^2)} .
\end{equation}
where $c_1 $ is defined by (\ref{definition_c_1_2}). For any $\mu > 0$ we have
\begin{equation}
\mathbb{E} \; e^{\mu X_t} 1_{ \mu \geq 1} = e^{a \varphi_1 (t; \mu) + v_0 \psi_1(t; \mu)} 1_{ 1 \le \mu \le  \hat{\mu}_+ \wedge \mu^\ast_+ (t)} +
e^{a \varphi_2 (t; \mu) + v_0 \psi_2(t; \mu)} 1_{ \hat{\mu}_+ < \mu <  \mu^\ast_+ (t)} ,
\end{equation}
and
\begin{equation}
\mathbb{E} \; e^{-\mu X_t}  1_{ \mu > 0} = e^{a \varphi_1 (t; -\mu) + v_0 \psi_1(t; -\mu)} 1_{0< \mu <  \hat{\mu}_- \wedge \mu^\ast_- (t)} +
e^{a \varphi_2 (t; -\mu) + v_0 \psi_2(t; -\mu)} 1_{ \hat{\mu}_- < \mu <  \mu^\ast_- (t)} ,
\end{equation}
where $ \varphi_{1,2}$ and $ \psi_{1,2}$ are given in (\ref{psi_CIR}) and (\ref{phi_CIR})
\end{proposition}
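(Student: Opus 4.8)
The plan is to simply track which of the two branches of \eqref{psi_CIR}--\eqref{phi_CIR} is active as $\mu$ varies. First I would recall, from the Girsanov reduction carried out above, that for $0<\mu<\mu^\ast_+(t)$ one has $\mathbb{E}\,e^{\mu X_t}=\mathbb{E}^{\mathbb{Q}}\,e^{\frac{\mu^2-\mu}{2}\int_0^t V_s\,ds}=e^{a\varphi(t;\mu)+v_0\psi(t;\mu)}$, the last equality being the explicit CIR formula recalled from \cite{Aly13}. By definition of $\varphi$ and $\psi$, the pair $(\varphi_1,\psi_1)$ is used exactly when $0\le\frac{\mu^2-\mu}{2}\le\frac{(b-\rho\sigma\mu)^2}{2\sigma^2}$ and $(\varphi_2,\psi_2)$ when $\frac{\mu^2-\mu}{2}>\frac{(b-\rho\sigma\mu)^2}{2\sigma^2}$; multiplying the first inequality by $2\sigma^2$ and rearranging, these two regimes are exactly $c_1(\mu)\ge 0$ and $c_1(\mu)<0$, with $c_1$ as in \eqref{definition_c_1_2}. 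I would also note that the excluded region $\frac{\mu^2-\mu}{2}<0$ is precisely $0<\mu<1$ --- the regime producing the third form mentioned in the preceding Remark --- which is why the identity is stated only on $\{\mu\ge1\}$; since $S$ is a true martingale, $\mu^\ast_+(t)>1$, so restricting to $\mu\ge1$ costs nothing.

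Next I would pin down the threshold separating the two branches. Writing $c_1(\mu)=-\sigma^2(1-\rho^2)\mu^2+(\sigma^2-2b\rho\sigma)\mu+b^2$, this is (for $|\rho|<1$) a downward parabola in $\mu$ with $c_1(0)=b^2>0$, hence it has a unique positive root; solving the quadratic yields exactly the value $\hat{\mu}_+$ of the statement, and $c_1\ge0$ on $[0,\hat{\mu}_+]$ while $c_1<0$ on $(\hat{\mu}_+,\infty)$. I would then check that $\hat{\mu}_+<\mu^\ast_+(t)$: for $\mu>\hat{\mu}_+$ the argument $\sqrt{-c_1(\mu)}\,\frac{t}{2}+\arctan\!\big(\frac{-b+\rho\sigma\mu}{\sqrt{-c_1(\mu)}}\big)$ of the tangent in $\psi_2$ increases past $\frac{\pi}{2}$ at some finite $\mu$ (since $-c_1(\mu)\to+\infty$), and that value is exactly $\mu^\ast_+(t)$; so $(\hat{\mu}_+,\mu^\ast_+(t))$ is nonempty and the $\wedge\,\mu^\ast_+(t)$ in the first indicator is only a harmless safeguard for degenerate parameter choices. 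Combining with the previous paragraph, for $1\le\mu<\mu^\ast_+(t)$ the moment generating function equals $e^{a\varphi_1(t;\mu)+v_0\psi_1(t;\mu)}$ on $[1,\hat{\mu}_+]$ and $e^{a\varphi_2(t;\mu)+v_0\psi_2(t;\mu)}$ on $(\hat{\mu}_+,\mu^\ast_+(t))$, which is the first displayed identity.

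For the left tail I would run the same Girsanov reduction with $-\mu$ in place of $\mu$, obtaining $\mathbb{E}\,e^{-\mu X_t}=\mathbb{E}^{\mathbb{Q}}\,e^{\frac{\mu^2+\mu}{2}\int_0^t V_s\,ds}=e^{a\varphi(t;-\mu)+v_0\psi(t;-\mu)}$, the measure change now turning the CIR drift coefficient $b-\rho\sigma\mu$ into $b+\rho\sigma\mu$. Since $\frac{\mu^2+\mu}{2}\ge0$ for every $\mu>0$, the third-form region is never entered and no restriction beyond $\mu>0$ is needed. The governing discriminant is $c_1(-\mu)=(b+\rho\sigma\mu)^2-\sigma^2(\mu^2+\mu)=-\sigma^2(1-\rho^2)\mu^2+(2b\rho\sigma-\sigma^2)\mu+b^2$, again a downward parabola in $\mu$ with value $b^2>0$ at $\mu=0$; its unique positive root is the stated $\hat{\mu}_-$, the $(\varphi_1,\psi_1)$ branch governs $(0,\hat{\mu}_-]$, the $(\varphi_2,\psi_2)$ branch governs $(\hat{\mu}_-,\infty)$, and $\hat{\mu}_-<\mu^\ast_-(t)$ for the same tangent-blow-up reason. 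This gives the second displayed identity.

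I expect the only mildly delicate point to be the bookkeeping of the two case-split conditions: one must verify carefully that the inequality separating the branches in \eqref{psi_CIR}--\eqref{phi_CIR} is equivalent to the sign of $c_1(\mu)$ (resp. $c_1(-\mu)$), and handle the boundary $c_1=0$, where $(\varphi_2,\psi_2)$ degenerates continuously to $(\varphi_1,\psi_1)$ so that assigning $\mu=\hat{\mu}_\pm$ to the first branch, as the statement does, is consistent. Everything else is a routine substitution into \eqref{psi_CIR} and \eqref{phi_CIR}.
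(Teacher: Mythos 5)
Your proposal is correct and follows essentially the same route as the paper, whose proof is simply the observation that $\mu\ge 1$ forces $\tfrac{\mu^2-\mu}{2}\ge 0$ and that the branch condition in (\ref{psi_CIR})--(\ref{phi_CIR}) is equivalent to the sign of $c_1(\mu)$ (resp. $c_1(-\mu)$), with $\hat{\mu}_\pm$ the unique positive roots of these downward parabolas. Your additional remarks (explicit quadratic solution, boundary case $c_1=0$, nonemptiness of $(\hat\mu_\pm,\mu^\ast_\pm(t))$) only elaborate details the paper leaves implicit and do not change the argument.
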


\begin{proof}
This follows immediately from (\ref{psi_CIR}) and (\ref{phi_CIR}) and the fact that for $\mu> 1$ or $ \mu < 0$, we have $0 \le  \frac{\mu^2 - \mu }{2} $ and 
\[
    \frac{\mu^2 - \mu }{2}  < \frac{ (b - \rho \sigma \mu)^2 }{ 2 \sigma^2} ~~\Longleftrightarrow ~~ c_1 (\mu) > 0.
\]
\end{proof}

  The moment generating function of $X$ and $-X$ near the critical moments $ \mu^\ast_+$ and $\mu^\ast_-$ are given in the following result.
  \begin{proposition}\label{expans_lambda_heston}
   Denote $ \Lambda_\pm (t, \mu) \ln \mathbb{ E} e^{ \pm \mu X_t}$. For $\mu $ sufficiently close to $\mu^\ast_+ (t)$ (resp. $\mu^\ast_- (t)$) $\Lambda_\pm (t, \mu)$ is given as
   \begin{equation}\label{expansion_mgf_heston}
   \Lambda_\pm (t, \mu) = \frac{   \omega_\pm (t) }{  \mu^\ast_\pm (t) - \mu   } + \frac{2 a}{\sigma^2 }  \ln (    \frac{1}{ \mu^\ast - \mu}  ) + m_\pm (t) + \sum_{ i \geq 1} d^\pm_i (t) ( \mu^\ast_\pm (t)  - \mu)^i
   \end{equation}
     \end{proposition}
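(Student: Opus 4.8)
The plan is to extract the expansion directly from the closed forms (\ref{psi_CIR})--(\ref{phi_CIR}). Fix $t>0$. Since $\mu^\ast_+(t)>\hat\mu_+$, so that $c_1(\mu)<0$ on a left neighbourhood of $\mu^\ast_+(t)$, for $\mu$ close to $\mu^\ast_+(t)$ we are in the second branch, i.e.
\[
\Lambda_+(t,\mu)=a\,\varphi_2(t;\mu)+v_0\,\psi_2(t;\mu).
\]
Introduce
\[
\Theta(\mu):=\frac t2\sqrt{-c_1(\mu)}+\arctan\!\Big(\frac{-b+\rho\sigma\mu}{\sqrt{-c_1(\mu)}}\Big),
\]
so that the explosion of $\psi_2$, hence of $\Lambda_+(t,\cdot)$, is characterised by $\Theta(\mu^\ast_+(t))=\pi/2$. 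The key structural point is that $c_1$ is a quadratic polynomial with $c_1(\mu^\ast_+(t))\neq0$, so $\sqrt{-c_1}$, and therefore $\Theta$, extends holomorphically to a complex neighbourhood of $\mu^\ast_+(t)$, with $\Theta(\mu^\ast_+(t))=\pi/2$ and $\Theta'(\mu^\ast_+(t))\neq0$.

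First I would handle the $\psi_2$ term. Put $u(\mu):=\tfrac\pi2-\Theta(\mu)$; it is holomorphic near $\mu^\ast_+(t)$ with a simple zero, $u(\mu)=\Theta'(\mu^\ast_+(t))(\mu^\ast_+(t)-\mu)+O((\mu^\ast_+(t)-\mu)^2)$. Writing $\tan\Theta=\cot u=u^{-1}-\tfrac u3-\cdots$ and substituting the holomorphic $u$ into this Laurent series, together with $\psi_2(t;\mu)=\tfrac{b-\rho\sigma\mu}{\sigma^2}+\tfrac{\sqrt{-c_1(\mu)}}{\sigma^2}\tan\Theta(\mu)$ and the fact that $\tfrac{\sqrt{-c_1(\mu)}}{\sigma^2}$ is holomorphic and non-zero at $\mu^\ast_+(t)$, one sees that $\psi_2(t;\mu)$ is, in the variable $\mu^\ast_+(t)-\mu$, meromorphic with a single simple pole; extracting the residue gives $v_0\psi_2(t;\mu)=\tfrac{\omega_+(t)}{\mu^\ast_+(t)-\mu}+(\text{holomorphic in }\mu^\ast_+(t)-\mu)$ with $\omega_+(t)=v_0\sqrt{-c_1(\mu^\ast_+(t))}\big/\big(\sigma^2\Theta'(\mu^\ast_+(t))\big)$. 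Then I would handle the $\varphi_2$ term: its denominator $\cos\!\big(\arctan(\cdots)\big)=\tfrac{\sqrt{-c_1(\mu)}}{\sigma\sqrt{\mu^2-\mu}}$ is holomorphic and non-vanishing at $\mu^\ast_+(t)$ (recall $\mu^\ast_+(t)>1$), while its numerator $\cos\Theta(\mu)=\sin u(\mu)$ has a simple zero there; hence $\ln\cos\Theta(\mu)=\ln u(\mu)+\ln\tfrac{\sin u(\mu)}{u(\mu)}=\ln(\mu^\ast_+(t)-\mu)+(\text{holomorphic in }\mu^\ast_+(t)-\mu)$, and, since the $\tfrac{\tilde b}{\sigma^2}t$ term is holomorphic in $\mu$, this yields $a\,\varphi_2(t;\mu)=\tfrac{2a}{\sigma^2}\ln\tfrac1{\mu^\ast_+(t)-\mu}+(\text{holomorphic in }\mu^\ast_+(t)-\mu)$.

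Adding the two contributions gives
\[
\Lambda_+(t,\mu)=\frac{\omega_+(t)}{\mu^\ast_+(t)-\mu}+\frac{2a}{\sigma^2}\ln\frac1{\mu^\ast_+(t)-\mu}+h_+\!\big(\mu^\ast_+(t)-\mu\big),
\]
with $h_+$ holomorphic at $0$; writing $h_+(z)=m_+(t)+\sum_{i\ge1}d^+_i(t)z^i$ — the series converging up to the first other zero of $\cos\Theta$ or branch point of $\sqrt{-c_1}$ — proves (\ref{expansion_mgf_heston}) in the $+$ case, and the $-$ case is identical after the substitution $\mu\mapsto-\mu$ and $\mu^\ast_+(t)\mapsto\mu^\ast_-(t)$, using $\hat\mu_-$ in place of $\hat\mu_+$. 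The only non-formal point, and the one I expect to be the real work, is the verification that $c_1(\mu^\ast_\pm(t))\neq0$ and $\Theta'(\mu^\ast_\pm(t))\neq0$ — i.e. that the singularity is a genuine simple pole, not a higher-order pole or a branch point. This can be obtained either by checking directly that $\Theta$ is strictly increasing on $(\hat\mu_\pm,\mu^\ast_\pm(t)]$, or by using the strict convexity and strict monotonicity of $\Lambda_\pm(t,\cdot)$ on $[0,\mu^\ast_\pm(t))$, which forces the leading singular coefficient $\omega_\pm(t)$ to be strictly positive, hence non-degenerate.
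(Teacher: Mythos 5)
Your route is genuinely different from the paper's: the paper does not re-derive this expansion at all, it simply invokes Theorem~3.2 of \cite{Aly13} (and points to \cite{friz11}), whereas you extract the simple pole plus logarithmic singularity directly from the closed forms \eqref{psi_CIR}--\eqref{phi_CIR} by writing $u=\tfrac{\pi}{2}-\Theta$ and using analyticity of $\sqrt{-c_1}$ at $\mu^\ast_\pm(t)$. That is a legitimate and self-contained strategy, and your residue bookkeeping (pole only from $\psi_2$, log only from $\varphi_2$, $\cos\arctan(\cdot)=\sqrt{-c_1}/(\sigma\sqrt{\mu^2-\mu})$ nonvanishing) is correct as far as it goes. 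However, there are two genuine gaps. First, the opening claim ``$\mu^\ast_+(t)>\hat\mu_+$, so we are in the second branch'' is false in general. For the $+$ wing with $\rho>0$ and $\rho\sigma>b$, the region $\mu\in(1,\hat\mu_+)$ has $c_1(\mu)>0$ together with $b-\rho\sigma\mu<0$, and there the explosion time $T^\ast(\mu)=\frac{1}{\sqrt{c_1}}\ln(1/c_2)$ is finite and tends to $+\infty$ as $\mu\downarrow 1$; hence for maturities $t$ beyond the explosion time at $\hat\mu_+$ one has $\mu^\ast_+(t)\in(1,\hat\mu_+)$, i.e.\ the blow-up is produced by the vanishing of $c_2(\mu)e^{\sqrt{c_1(\mu)}t}-1$ in the \emph{first} branch $\psi_1,\varphi_1$. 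This is exactly why the paper's preceding proposition carries the indicator $1_{1\le\mu\le\hat\mu_+\wedge\mu^\ast_+(t)}$. The conclusion of the proposition still has the stated form in that regime (simple zero of the denominator gives a simple pole for $\psi_1$ and the same $\tfrac{2a}{\sigma^2}\ln\tfrac{1}{\mu^\ast-\mu}$ from $\varphi_1$), but you must run the analogous argument on that branch; your argument as written only covers the trigonometric case, which is the general situation for the $-$ wing and for the $+$ wing when $\rho\le 0$ or $t$ is not too large.

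Second, your fallback argument for the crucial nondegeneracy is not sound. You need $c_1(\mu^\ast_\pm(t))\neq 0$ and $\Theta'(\mu^\ast_\pm(t))\neq 0$ so that the singularity is a \emph{simple} pole, and you propose to get this from strict convexity and monotonicity of $\Lambda_\pm(t,\cdot)$ forcing $\omega_\pm(t)>0$. But a double (or higher-order) pole with positive leading coefficient is equally compatible with a convex, increasing, exploding $\Lambda$, so convexity cannot rule out $\Theta'(\mu^\ast)=0$; it only shows that \emph{if} the expansion has the claimed form then $\omega_\pm\ge 0$. What is actually needed is your first suggestion made precise: strict monotonicity in $\mu$ of the explosion-time function $T^\ast$ (equivalently, writing $\Theta(\mu)-\tfrac{\pi}{2}=\tfrac{\sqrt{-c_1(\mu)}}{2}\,(t-T^\ast(\mu))$ near $\mu^\ast_+(t)$ and checking ${T^\ast}'(\mu^\ast)\neq 0$ by direct differentiation), together with $c_1(\mu^\ast)\neq 0$ away from the single boundary maturity where $\mu^\ast_\pm(t)=\hat\mu_\pm$. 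This nondegeneracy is precisely the content supplied by the references the paper cites, so it must be proved, not inferred from convexity, if your derivation is to replace the citation.
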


\begin{proof}
See Theorem~3.2 in \cite{Aly13}. A similar result is found in \cite{friz11}.
\end{proof}

 \begin{remark}
It is difficult, or even impossible to obtain an explicit expression for $\mu^\ast_+$ and $\mu^\ast_-$. However, all other coefficients of (\ref{expansion_mgf_heston}) may be obtained  explicitly (or numerically) in terms of the model parameters and $ \mu^\ast_\pm$; to calculate numerically, one needs just to see them as
\begin{eqnarray*}
\omega_\pm (t) &=& \lim_{\mu \to \mu^\ast_\pm (t) } ( \mu^\ast_\pm (t) - \mu) \left( \Lambda_\pm (t, \mu) - \frac{2 a}{\sigma^2 }  \ln (    \frac{1}{ \mu^\ast - \mu}  )  \right) ,
\\ 
m_\pm  (t) &=& \lim_{\mu \to \mu^\ast_\pm (t) }   \left( \Lambda_\pm (t, \mu) - \frac{   \omega_\pm (t) }{  \mu^\ast_\pm (t) - \mu   } - \frac{2 a}{\sigma^2 }  \ln (    \frac{1}{ \mu^\ast - \mu}  )  \right), ~~~~~~~~~~ \textrm{and for }~i \geq 1,
\\
d^\pm_i (t) &=&\lim_{\mu \to \mu^\ast_\pm (t) }   \frac{1}{( \mu^\ast - \mu)^i}   \left( \Lambda_\pm (t, \mu) - \frac{   \omega_\pm (t) }{  \mu^\ast_\pm (t) - \mu   } - \frac{2 a}{\sigma^2 }  \ln (    \frac{1}{ \mu^\ast - \mu}  ) - m_\pm (t)  \right)
\end{eqnarray*}
  \end{remark}

 We clearly see that the Heston model satisfies the assumption of Theorem~\ref{mgf_to_lv}. The next result gives a sharp asymptotic expansion of the local volatility under Heston's model.
 \begin{proposition}\label{local_vol_heston}
 The local volatility corresponding to the Heston model satisfies
 \begin{equation}\label{localvol_heston}
  \Sigma^2(t, \pm y ) = \sigma^\pm_0 (t) y +
   \frac{   \omega'_\pm (t) \tilde{\nu}_\pm(y) + m'_\pm(t) \mp q \mp   \sigma^\pm_0 (t) ( 1 \mp 2 \mu^\ast_\pm   \pm \frac{1}{\tilde{\nu}_\pm (y) }Ê) y / \tilde{\nu}_\pm(y)     }{  \frac{1}{2}  ( (\mu^\ast_\pm   - \frac{1}{\tilde{\nu}_\pm(y)} )^2 \mp (\mu^\ast_\pm   - \frac{1}{\tilde{\nu}_\pm(y)} )) \left(   1 - \frac{1}{4} \frac{ \tilde{\nu}_\pm(y)^2}{2 y^2 {\tilde{\nu}_\pm}' (y) }       \right) } + o(1),
\end{equation}
where
\[
 \sigma^\pm_0 (t) = \frac{ -2  \partial_t \mu^\ast_\pm (t)      }{ {\mu^\ast_\pm}^2 (t)  \mp  \mu^\ast_\pm  (t)    },
\]
and
\begin{equation}\label{approx_nu}
\tilde{\nu}_\pm (t, y) = \frac{ 1}{ \sqrt{ \omega_\pm (t) } } (y+  \frac{a^2}{\sigma^4 \omega_\pm (t) } )^\frac{1}{2}    - \frac{a}{\sigma^2 \omega_\pm (t) }  .
\end{equation}
 \end{proposition}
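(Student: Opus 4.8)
The plan is to specialize Theorem~\ref{mgf_to_lv} to the Heston model, feeding in the expansion of $\Lambda_\pm$ near the critical moment supplied by Proposition~\ref{expans_lambda_heston}. First I would check that Heston's $\Lambda_+$ and $\Lambda_-$ satisfy (ii)--(iii) of Assumption~\ref{hyp_laregdev} (the equality of the $\limsup$ and $\liminf$ in Theorem~\ref{thm22_ext}, used through Proposition~\ref{tauberian1}, is known for Heston, see \cite{Aly13}). Substituting $\mu=\mu^\ast_\pm(t)-1/x$ in (\ref{expansion_mgf_heston}) and using $\mu^\ast_\pm(t)-\mu=1/x$ gives
\[
\Delta_\pm(t,x) = \Lambda_\pm\!\big(t,\mu^\ast_\pm(t)-\tfrac1x\big) = \omega_\pm(t)\,x + \frac{2a}{\sigma^2}\,\ln x + m_\pm(t) + \sum_{i\ge 1} d^\pm_i(t)\,x^{-i},
\]
which is $\mathcal{C}^2$ for large $x$ and regularly varying of index $1$; hence $\alpha_\pm=1$, so $\gamma_\pm=\alpha_\pm/(\alpha_\pm+1)=\tfrac12$ and $\gamma_\pm^2-\gamma_\pm=-\tfrac14$, precisely the constant in the denominator factor $1-\tfrac14\,\tilde\nu_\pm(y)^2/(2y^2\tilde\nu'_\pm(y))$ of (\ref{localvol_heston}). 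Condition (iii) holds because $p^\ast_\pm={\Lambda^\ast_\pm}'$ inherits smooth variation from the power-type behaviour of $\Lambda'_\pm$ at $\mu^\ast_\pm$.

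Next I would compute the ingredients of (\ref{localvol_for_mgf}). Since $a,\sigma$ are $t$-independent, differentiating the display above in $t$ with $x$ held fixed annihilates the logarithm, giving $\partial_t\Delta_\pm(t,x)=\omega'_\pm(t)\,x+m'_\pm(t)+\mathcal{O}(x^{-1})$. The function $\nu_\pm$ is the inverse of $x\mapsto x^2\partial_x\Delta_\pm(t,x)=\Lambda'_\pm(\mu^\ast_\pm(t)-1/x)=\omega_\pm(t)x^2+\tfrac{2a}{\sigma^2}x+\mathcal{O}(1)$; solving this quadratic for $x$ in terms of $y$ and dropping the $\mathcal{O}(1)$ correction yields exactly $\tilde\nu_\pm(t,y)$ of (\ref{approx_nu}), with $\nu_\pm(y)=\tilde\nu_\pm(y)+\mathcal{O}(y^{-1/2})$. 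Finally, differentiating $\Lambda'_\pm(\mu^\ast_\pm-1/\nu_\pm(y))=y$ gives $\nu'_\pm(y)=\nu_\pm(y)^2/\Lambda''_\pm(\mu^\ast_\pm-1/\nu_\pm(y))$, whence
\[
\frac{\nu_\pm(y)^2}{2y^2\nu'_\pm(y)} = \frac{\Lambda''_\pm(\mu^\ast_\pm-1/\nu_\pm(y))}{2\,{\Lambda'_\pm}^2(\mu^\ast_\pm-1/\nu_\pm(y))} = \mathcal{O}(y^{-1/2}),
\]
consistent with the Remark following Theorem~\ref{mgf_to_lv}, and it equals $\tilde\nu_\pm(y)^2/(2y^2\tilde\nu'_\pm(y))$ up to a factor $1+o(1)$.

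It then remains to substitute $\Delta_\pm$, $\partial_t\Delta_\pm$, $\nu_\pm$ and this remainder into (\ref{localvol_for_mgf}) and collect terms; the $\pm$ cases are handled, as in the proof of Theorem~\ref{mgf_to_lv}, by passing to $\tilde Y_t=-Y_t$. The main obstacle is the bookkeeping of error terms: because $\nu_\pm(y)$ grows like $\sqrt y$, the quantities $\omega'_\pm(t)\nu_\pm(y)$ and $\sigma^\pm_0(t)(1\mp 2\mu^\ast_\pm\pm 1/\nu_\pm(y))\,y/\nu_\pm(y)$ entering the numerator are themselves of order $\sqrt y$, so one must check that replacing $\nu_\pm$ by $\tilde\nu_\pm$ and discarding the tails $\sum_{i\ge1}d^\pm_i(t)x^{-i}$ and $\mathcal{O}(x^{-1})$ perturbs the numerator by only $o(1)$ — which works since those are relative errors of order $\mathcal{O}(y^{-1})$ against quantities of size $\mathcal{O}(\sqrt y)$ — while the denominator, bounded away from $0$ and $\infty$, absorbs its $\mathcal{O}(y^{-1/2})$ perturbations directly into the overall $o(1)$.
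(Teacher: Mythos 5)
Your proposal is correct and follows essentially the same route as the paper: a direct application of Theorem~\ref{mgf_to_lv} with the Heston expansion of Proposition~\ref{expans_lambda_heston} (giving $\alpha_\pm=1$, hence the constant $-\tfrac14$), together with the approximation of $\nu_\pm$ by the positive root $\tilde{\nu}_\pm$ of the quadratic $\omega_\pm x^2+\tfrac{2a}{\sigma^2}x=y$, with error $\mathcal{O}(y^{-1/2})$ absorbed into the $o(1)$. In fact you verify the hypotheses and track the error terms more explicitly than the paper's own two-line proof does.
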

 
   \begin{proof}
  The first statement of the proposition follows from a direct application of Theorem~\ref{mgf_to_lv}.  For the approximation (\ref{approx_nu}) we observe that $\nu_\pm (t,y)$ is the unique solution of
  \[
  \omega_\pm (t) x^2 + \frac{2 a}{ \sigma} x - d^\pm_1 - \sum_{ i \geq 1} (i+1) d^\pm_{i+1} (t) x^{ - i  } = y.
  \]
  This can be approximated by taking the unique positive solution to
  \[
   \omega_\pm (t) x^2 + \frac{2 a}{ \sigma} x   = y : ~~~~~~ x = \tilde{\nu}_\pm (y) =  \frac{ 1}{ \sqrt{ \omega_\pm (t) } } (y+  \frac{a^2}{\sigma^4 \omega_\pm (t) } )^\frac{1}{2}    - \frac{a}{\sigma^2 \omega_\pm (t) } .
  \]
  It is easy then to see that the error of this approximation of order $ y^\frac{-1}{2}$; that is
  \[
  \nu_\pm (t,y) = \tilde{\nu}_\pm (y) + \mathcal{O} (y^\frac{-1}{2} ).
  \]
  Choosing $\nu_\pm$ or $\tilde{\nu}_\pm$ is then equivalent in the formula since the error is $ o(1)$.
  \end{proof}
 
 \begin{remark}
 The error in the expansion (\ref{localvol_heston}) is actually $\mathcal{O}(y^\frac{-1}{4}) $ which is much more precise than $o(1)$ (the difference between this approximation and the actual local volatility tends to $0$ as $y \to \infty$ anyway). Indeed,  the term $o(1)$ comes from the product  $o( \frac{ \tilde{\nu}_\pm(y)^2}{2 y^2 {\tilde{\nu}_\pm}' (y) }  ) $, which is equivalent to $ o(y^\frac{-1}{2})$ and the numerator of the second term in the right hand side of (\ref{localvol_heston}),  which is  $\approx ~  y^\frac{1}{2}$.   We can see by looking at Theorem~\ref{thm22_ext} and its proof in \cite{Aly13} and Remark~8 in \cite{friz11} that $o( \frac{ \tilde{\nu}_\pm(y)^2}{2 y^2 {\tilde{\nu}_\pm}' (y) }  )$ has to be $\mathcal{O}(y^\frac{-3}{4})$. This will lead to an approximation of order $\mathcal{O}(y^\frac{-1}{4}) $ instead of $o(1)$ in (\ref{localvol_heston}).
 \end{remark}
 
 \begin{remark}
 The expansion (\ref{localvol_heston})  does not use the coefficients $d^\pm_i$ as they appear only on higher order terms which are not covered by Theorem~\ref{mgf_to_lv}. There will be therefore no need to calculate them if the only goal is to apply Theorem~\ref{mgf_to_lv}.
  \end{remark}

 A sharp asymptotic formula for the implied variance is easily obtained by applying  Theorem~\ref{from_mgf_to_iv}. The result is similar to the result obtained in \cite{achil14}.

 \begin{proposition}\label{iv_heston}
 The following expansion holds for the implied volatility under Heston's model
 \begin{equation}
 t  \sigma^2 (t,\pm k)  = 4 \tilde{\Lambda}^\ast_\pm (t,k) +2 \tilde{c}^\pm_t (k) \mp 2k - 4 \sqrt{ ( \tilde{\Lambda}^\ast_\pm(t,k) +\frac{  \tilde{c}^\pm_t (k)}{2})  ( \tilde{\Lambda}^\ast_\pm(t,k) +\frac{  \tilde{c}^\pm_t (k)}{2} \mp k)  }    + \mathcal{O} (k^\frac{-  1}{2} ) .
\end{equation}
where $\tilde{\Lambda}^\ast_\pm (t,k) = (\mu^\ast_\pm (t) - \frac{1}{  \tilde{\nu}_\pm (t,k)  } )k - \tilde{\Lambda} (t,\mu^\ast_\pm (t) - \frac{1}{  \tilde{\nu}_\pm (t,k)})  $, with  $\tilde{\nu}$ defined by (\ref{approx_nu}),   $ \tilde{\Lambda}$ is defined by
\begin{equation}
 \tilde{\Lambda}_\pm (t, \mu) = \frac{   \omega_\pm (t) }{  \mu^\ast_\pm (t) - \mu   } + \frac{2 a}{\sigma^2 }  \ln (    \frac{1}{ \mu^\ast - \mu}  ) + m_\pm (t),
\end{equation}
where $\omega_\pm$, $m_\pm$ are given in Proposition~\ref{expans_lambda_heston} and $ \tilde{c}^\pm_t (k) $ is given by
\[
 \tilde{c}^\pm_t (k) = - \ln ( k \frac{ \partial_k \tilde{\nu} (t,k)   }{\tilde{\nu}^2 (t,k)} ) + 2 \ln\left(\sqrt{  \frac{ \mu^\ast_\pm (t) }{2} }Ê- \frac{\mu^\ast_\pm (t) }{2} \sqrt{  \mp2 + 4 \mu^\ast_\pm (t) - 4 \sqrt{ \mu^\ast_\pm ( \mu^\ast_\pm \mp 1) }   }  Ê\right)
\]
 \end{proposition}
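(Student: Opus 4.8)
\medskip

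\noindent\textbf{Proof plan.} The plan is to read off the statement directly from Theorem~\ref{from_mgf_to_iv}, feeding in the explicit near--critical expansion of the Heston moment generating function supplied by Proposition~\ref{expans_lambda_heston}, and then to check that every truncation performed along the way only costs an $\mathcal{O}(k^{-1/2})$ error. First I would verify the hypotheses of Theorem~\ref{from_mgf_to_iv}: as recalled in the introduction and in \cite{Aly13}, under the Heston dynamics both $X_t$ and $-X_t$ satisfy Assumption~\ref{hyp_laregdev} together with the equality of ``$\limsup$'' and ``$\liminf$'' in (\ref{lim_sup2})--(\ref{lim_inf2}). Moreover, (\ref{expansion_mgf_heston}) gives $\Lambda_\pm(t,\mu^\ast_\pm(t)-\tfrac1x)=\omega_\pm(t)\,x+\tfrac{2a}{\sigma^2}\ln x+m_\pm(t)+\mathcal{O}(1/x)$, so this function is regularly varying of index $\alpha_\pm=1$; hence $\gamma_\pm=\tfrac12$ and the error exponent $\tfrac{-\alpha_\pm\wedge 1}{\alpha_\pm+1}$ in Theorem~\ref{from_mgf_to_iv} equals $-\tfrac12$. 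With $\alpha_\pm=1$, Theorem~\ref{from_mgf_to_iv} already yields the stated formula, but written through the \emph{true} Legendre transform $\Lambda^\ast_\pm$ and the true inverse $\nu_\pm$, the latter entering via $\partial_{kk}\Lambda^\ast_\pm=\partial_k\nu_\pm/\nu_\pm^2$ since $\partial_k\Lambda^\ast_\pm=p^\ast_\pm=\mu^\ast_\pm-1/\nu_\pm$. It remains to replace $\Lambda_\pm,\Lambda^\ast_\pm,\nu_\pm$ by the truncated objects $\tilde\Lambda_\pm,\tilde\Lambda^\ast_\pm,\tilde\nu_\pm$.

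Next I would quantify the truncation. Write $\Lambda_\pm=\tilde\Lambda_\pm+R_\pm$ with $R_\pm(t,\mu)=\sum_{i\ge 1}d^\pm_i(t)(\mu^\ast_\pm-\mu)^i=\mathcal{O}(\mu^\ast_\pm-\mu)$. Because the maximiser in the Legendre transform at level $k$ is $p^\ast_\pm(k)=\mu^\ast_\pm-1/\nu_\pm(t,k)$ with $\nu_\pm(t,k)\to\infty$, one has $\mu^\ast_\pm-p^\ast_\pm(k)=\mathcal{O}(k^{-1/2})$; this is exactly the estimate that yields $\nu_\pm(t,k)=\tilde\nu_\pm(t,k)+\mathcal{O}(k^{-1/2})$ in the proof of Proposition~\ref{local_vol_heston} (solve $\partial_\mu\Lambda_\pm(t,\mu^\ast_\pm-1/(\cdot))=k$ and discard the $\mathcal{O}(1)+\mathcal{O}(1/u)$ lower--order terms). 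Consequently $R_\pm$ evaluated near $p^\ast_\pm(k)$ is $\mathcal{O}(k^{-1/2})$, and an envelope argument for the Legendre transform gives $\Lambda^\ast_\pm(t,k)=\tilde\Lambda^\ast_\pm(t,k)+\mathcal{O}(k^{-1/2})$; differentiating the fixed--point relation in $k$ gives $\partial_k\nu_\pm=\partial_k\tilde\nu_\pm(1+\mathcal{O}(k^{-1}))$ and $\nu_\pm=\tilde\nu_\pm(1+\mathcal{O}(k^{-1}))$, whence $\partial_{kk}\Lambda^\ast_\pm(t,k)=\partial_k\nu_\pm/\nu_\pm^2=(\partial_k\tilde\nu_\pm/\tilde\nu_\pm^2)(1+\mathcal{O}(k^{-1}))$, so the two incarnations of $\tilde c^\pm_t$ coincide up to $\mathcal{O}(k^{-1})$.

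Finally I would insert these replacements into the right--hand side of Theorem~\ref{from_mgf_to_iv} and check that the perturbations stay $\mathcal{O}(k^{-1/2})$. Using $\tilde\Lambda^\ast_\pm(t,k)\sim\mu^\ast_\pm k$, $\tilde\Lambda^\ast_\pm(t,k)\mp k\sim(\mu^\ast_\pm\mp 1)k$ (with $\mu^\ast_+>1$, the martingale condition, making the square root real), and $\tilde c^\pm_t(k)=\mathcal{O}(\ln k)$, the map $L\mapsto\sqrt{(L+\tfrac{\tilde c}{2})(L+\tfrac{\tilde c}{2}\mp k)}$ has derivative $\dfrac{2L+\tilde c\mp k}{2\sqrt{(L+\tilde c/2)(L+\tilde c/2\mp k)}}=\dfrac{2\mu^\ast_\pm\mp 1}{2\sqrt{\mu^\ast_\pm(\mu^\ast_\pm\mp 1)}}+o(1)=\mathcal{O}(1)$ along the relevant range, so an $\mathcal{O}(k^{-1/2})$ change in $\Lambda^\ast_\pm$ produces only an $\mathcal{O}(k^{-1/2})$ change in the whole expression, which is absorbed into the $\mathcal{O}(k^{-1/2})$ error already present; collecting terms gives the claimed formula. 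I expect the main obstacle to be the preceding step: making the envelope/implicit--function estimates for $\Lambda^\ast_\pm$ and for the inverse $\nu_\pm$ precise \emph{uniformly} for $\mu$ in a shrinking neighbourhood of $\mu^\ast_\pm$, so as to certify that each truncation error is genuinely $\mathcal{O}(k^{-1/2})$ and no larger; once that is granted, propagating the error through the square root is routine.
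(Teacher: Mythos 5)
Your proposal is correct and follows essentially the same route as the paper: the paper's proof is exactly a direct application of Theorem~\ref{from_mgf_to_iv} with $\Lambda$ replaced by $\tilde{\Lambda}$ (noting $\tilde{\Lambda}^\ast_\pm$ is its Fenchel--Legendre transform), which is what you do. Your additional bookkeeping — identifying $\alpha_\pm=1$ so the error exponent is $-\tfrac12$, bounding the truncation $\Lambda_\pm-\tilde{\Lambda}_\pm$ and $\nu_\pm-\tilde{\nu}_\pm$ at the optimizer, and propagating the $\mathcal{O}(k^{-1/2})$ perturbation through the square root — simply makes explicit the details the paper leaves implicit.
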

 
 \begin{proof}
 It follows from a direct application of Theorem~\ref{from_mgf_to_iv} by replacing $\Lambda$ with $\tilde{\Lambda}$. It's worth noticing that $\tilde{\Lambda}^\ast_\pm(t,.)$ is the Fenchel-Legendre transform of $\tilde{\Lambda}$.
  \end{proof}

    \subsection{Stein-Stein model}
The dynamics of the stock price in the Stein-Stein model is given by:
\begin{eqnarray*}
dS_t &=& \mu S_t d t + | Y_t| S_t  dW_t, \\
d Y_t &=& q (m - Y_t ) d t + \sigma dZ_t, ~~~~~Êd \langle W , Z \rangle_t = \rho d t.
\end{eqnarray*}
This  model was introduced by  Stein and Stein in \cite{Stein91}. In \cite{jacquier15}, Deuschel et al derive an asymptotic formulae for the density of $ X_t := \ln (S_t) $ as
\[
f(x)  =    e^{  -B_1(t)  x +B_2(t) \sqrt{x} - \frac{1}{2}  \ln (x) + B_3 (t) } ( 1 + \mathcal{O} (x^\frac{-1}{2})Ê ).
\]
From this we immediately have
\[
\mathbb{P} (X_t \geq x) = e^{  -B_1(t)  x +B_2 \sqrt{x} - \frac{1}{2}  \ln (x) + B_3(t)- \ln ( B_1 (t) )} ( 1 + \mathcal{O} (x^\frac{-1}{2})Ê ).
\]
We can then easily see that the (positive) moment generating function of $X_t$: $ \mathbb{E} e^{ \mu X_t} 1_{X_t > 0}$  explodes at $\mu^\ast_t = B_1$ and that
\[
\Lambda(\mu) := \ln \mathbb{E} e^{ \mu X_t}   = \frac{  \frac{1}{4} B^2_2  }{ B_1 - \mu} + \frac{1}{2} \ln (  \frac{ 1  }{ B_1 - \mu}) + B_3 - \frac{1}{2} \ln( \frac{B_2}{ 8 \pi B_1} ) + \mathcal{O} ( B_1 - \mu).
\]
We derive then a very similar expansion of the local volatility as well as the implied volatility (the positive wings) to the case of Heston's model by applying the formulas in Proposition~\ref{local_vol_heston} and Proposition~\ref{iv_heston}.

\end{document}